\documentclass[11pt,a4paper]{article}
\usepackage{amssymb,latexsym,amsmath}
\usepackage[dvipdfmx]{graphicx}
\usepackage{enumerate}
\usepackage{mysty2e_eng}
\usepackage{color}
\usepackage{txfonts}

\DeclareMathOperator*{\argmin}{arg\,min}

\definecolor{brown}{rgb}{0.8,0.4,0}
\definecolor{purple}{rgb}{0.49,0.18,0.56}

\begin{document}

\def\varliminf{\mathop{\underline{\rule[-0.02em]{0em}{0.2em}
 \hbox{\rm lim}}}}
\def\varlimsup{\mathop{\overline{\hbox{\rm lim}}}}
\def\esssup{\mathop{\hbox{\rm ess\,sup}}}


\newcounter{thebrenum}
\newenvironment{brenum}%
{\begin{list}{(\arabic{thebrenum}) }%
{\usecounter{thebrenum}
\setcounter{thebrenum}{0}
\setlength{\labelsep}{0ex}
\setlength{\labelwidth}{4.1ex}
\setlength{\leftmargin}{\labelwidth}
\setlength{\itemsep}{0ex}
\setlength{\parsep}{0.5ex}
}}%
{\end{list}}


\newcounter{thebrenumalph}
\newenvironment{brenumalph}%
{\begin{list}{(\alph{thebrenumalph}) }%
{\usecounter{thebrenumalph}
\setcounter{thebrenumalph}{0}
\setlength{\labelsep}{0ex}
\setlength{\labelwidth}{4.1ex}
\setlength{\leftmargin}{\labelwidth}
\setlength{\itemsep}{0ex}
\setlength{\parsep}{0.5ex}
}}%
{\end{list}}


\newcounter{thebrenumroman}
\newenvironment{brenumroman}%
{\begin{list}{(\roman{thebrenumroman}) }%
{\usecounter{thebrenumroman}
\setcounter{thebrenumroman}{0}
\setlength{\labelsep}{0ex}
\setlength{\labelwidth}{3.3ex}
\setlength{\leftmargin}{\labelwidth}
\setlength{\itemsep}{0ex}
\setlength{\parsep}{0.7ex}
}}%
{\end{list}}


\newcommand{\bda}{\boldsymbol{a}}
\newcommand{\bdb}{\boldsymbol{b}}
\newcommand{\bdc}{\boldsymbol{c}}
\newcommand{\bdd}{\boldsymbol{d}}
\newcommand{\bde}{\boldsymbol{e}}
\newcommand{\bdf}{\boldsymbol{f}}
\newcommand{\bdg}{\boldsymbol{g}}
\newcommand{\bdh}{\boldsymbol{h}}
\newcommand{\bdi}{\boldsymbol{i}}
\newcommand{\bdj}{\boldsymbol{j}}
\newcommand{\bdk}{\boldsymbol{k}}
\newcommand{\bdl}{\boldsymbol{l}}
\newcommand{\bdm}{\boldsymbol{m}}
\newcommand{\bdn}{\boldsymbol{n}}
\newcommand{\bdo}{\boldsymbol{o}}
\newcommand{\bdp}{\boldsymbol{p}}
\newcommand{\bdq}{\boldsymbol{q}}
\newcommand{\bdr}{\boldsymbol{r}}
\newcommand{\bds}{\boldsymbol{s}}
\newcommand{\bdt}{\boldsymbol{t}}
\newcommand{\bdu}{\boldsymbol{u}}
\newcommand{\bdv}{\boldsymbol{v}}
\newcommand{\bdw}{\boldsymbol{w}}
\newcommand{\bdx}{\boldsymbol{x}}
\newcommand{\bdy}{\boldsymbol{y}}
\newcommand{\bdz}{\boldsymbol{z}}
\newcommand{\bdH}{\boldsymbol{H}}
\newcommand{\bdN}{\boldsymbol{N}}
\newcommand{\bdR}{\boldsymbol{R}}
\newcommand{\bdS}{\boldsymbol{S}}
\newcommand{\bdL}{\boldsymbol{L}}

\newcommand{\bdzero}{\boldsymbol{0}}
\newcommand{\bdone}{\boldsymbol{1}}

\newcommand{\bdalpha}{\boldsymbol{\alpha}}
\newcommand{\bdbeta}{\boldsymbol{\beta}}
\newcommand{\bdtheta}{\boldsymbol{\theta}}
\newcommand{\bdlambda}{\boldsymbol{\lambda}}
\newcommand{\bdxi}{\boldsymbol{\xi}}
\newcommand{\bdeta}{\boldsymbol{\eta}}
\newcommand{\bdzeta}{\boldsymbol{\zeta}}
\newcommand{\bdomega}{\boldsymbol{\omega}}

\newcommand{\rma}{{\mathrm a}}
\newcommand{\rmb}{{\mathrm b}}
\newcommand{\rmc}{{\mathrm c}}
\newcommand{\rmd}{{\mathrm d}}
\newcommand{\rme}{{\mathrm e}}
\newcommand{\rmf}{{\mathrm f}}
\newcommand{\rmg}{{\mathrm g}}
\newcommand{\rmh}{{\mathrm h}}
\newcommand{\rmi}{{\mathrm i}}
\newcommand{\rmj}{{\mathrm j}}
\newcommand{\rmk}{{\mathrm k}}
\newcommand{\rml}{{\mathrm l}}
\newcommand{\rmm}{{\mathrm m}}
\newcommand{\rmn}{{\mathrm n}}
\newcommand{\rmo}{{\mathrm o}}
\newcommand{\rmp}{{\mathrm p}}
\newcommand{\rmq}{{\mathrm q}}
\newcommand{\rmr}{{\mathrm r}}
\newcommand{\rms}{{\mathrm s}}
\newcommand{\rmt}{{\mathrm t}}
\newcommand{\rmu}{{\mathrm u}}
\newcommand{\rmv}{{\mathrm v}}
\newcommand{\rmw}{{\mathrm w}}
\newcommand{\rmx}{{\mathrm x}}
\newcommand{\rmy}{{\mathrm y}}
\newcommand{\rmz}{{\mathrm z}}

\newcommand{\rmA}{{\mathrm A}}
\newcommand{\rmE}{{\mathrm E}}
\newcommand{\rmF}{{\mathrm F}}
\newcommand{\rmG}{{\mathrm G}}
\newcommand{\rmH}{{\mathrm H}}
\newcommand{\rmL}{{\mathrm L}}
\newcommand{\rmM}{{\mathrm M}}
\newcommand{\rmN}{{\mathrm N}}
\newcommand{\rmO}{{\mathrm O}}
\newcommand{\rmR}{{\mathrm R}}
\newcommand{\rmT}{{\mathrm T}}
\newcommand{\rmU}{{\mathrm U}}

\newcommand{\vGamma}{\varGamma}
\newcommand{\vDelta}{\varDelta}
\newcommand{\vTheta}{\varTheta}
\newcommand{\vLambda}{\varLambda}
\newcommand{\vXi}{\varXi}
\newcommand{\vPi}{\varPi}
\newcommand{\vSigma}{\varSigma}
\newcommand{\vUpsilon}{\varUpsilon}
\newcommand{\vPhi}{\varPhi}
\newcommand{\vPsi}{\varPsi}
\newcommand{\vOmega}{\varOmega}

\newcommand{\bdvDelta}{{\boldsymbol{\varDelta}}}

\allowdisplaybreaks
\newcommand{\setunit}[1]{\setlength{\unitlength}{#1}}
\renewcommand{\hat}[1]{\widehat{#1}}
\renewcommand{\tilde}[1]{\widetilde{#1}}

\newcommand{\qed}{\hfill$\square$}
\newcommand{\fin}{\hfill$\Diamond$}
\newcommand{\etal}{{\em et al.}}
\newcommand{\ie}{{\em i.e.}}
\newcommand{\eg}{{\em e.g.}}

\newcommand{\Zset}{\mathbb Z}
\newcommand{\Rset}{\mathbb R}
\newcommand{\Cset}{\mathbb C}
\newcommand{\Msng}{\overline{\sigma}}
\newcommand{\msng}{\underline{\sigma}}
\newcommand{\Meig}{\overline{\lambda}}
\newcommand{\meig}{\underline{\lambda}}
\newcommand{\ol}[1]{\overline{#1}}
\newcommand{\ul}[1]{\underline{#1}}
\newcommand{\uind}[1]{u^{(#1)}}
\newcommand{\xind}[1]{x^{(#1)}}
\newcommand{\Nind}[1]{N^{(#1)}}

\newtheorem{thm}{Theorem}
\newtheorem{prop}{Proposition}
\newtheorem{lem}{Lemma}
\newtheorem{cor}{Corollary}
\newtheorem{problem}{Problem}
\newenvironment{prob}{\begin{problem}\upshape}{\end{problem}}
\newtheorem{algorithm}{Algorithm}
\newenvironment{algo}{\begin{algorithm}\upshape}{\end{algorithm}}
\newtheorem{example}{Example}
\newenvironment{exmpl}{\begin{example}\upshape}{\end{example}}
\newtheorem{remark}{Remark}
\newenvironment{rem}{\begin{remark}\upshape}{\end{remark}}
\newtheorem{assumption}{Assumption}
\newenvironment{assum}{\begin{assumption}\upshape}{\end{assumption}}

\title{\vspace*{-1.2cm}\bf\Large
Infinite-Horizon Sparse Optimal Control:
Solution through a Finite-Horizon Subproblem and
Its Receding-Horizon Implementation$^*$
\footnotetext{$\mbox{}^*$%
Released on \today.
This work is supported by the JSPS Kakenhi 23K03916 and
the Nanzan University Pache Research Subsidy
I-A-2 for the academic year 2026.
\vspace*{0.1cm}}%
\footnotetext{$\mbox{}^\dagger$%
Department of Mechanical Engineering and System Control,
Nanzan University,
Yamazatocho 18, Showa-ku, Nagoya 466-8673, Japan;
email: oishi@nanzan-u.ac.jp}%
\footnotetext{$\mbox{}^\ddagger$%
School of Informatics and Data Science, Hiroshima University,
Kagamiyama 1-4-1, Higashi-Hiroshima 739-8527, Japan}%
\footnotetext{$\mbox{}^\S$%
Graduate School of Advanced Science and Engineering,
Hiroshima University, 
Kagamiyama 1-4-1, Higashi-Hiroshima 739-8527, Japan}
\vspace*{-0.2cm}}
\author{\large Yasuaki Oishi$^\dagger$,
Takumi Iwata$^\ddagger$, and
Masaaki Nagahara$^\S$}
\date{\normalsize\vspace*{-0.9cm}}

\abstract{%
\small\setlength{\baselineskip}{14.5pt}\noindent
Sparse optimal control is considered in the infinite horizon.
In the literature, sparse control has been considered mostly in a finite horizon
for its formulation into a finite-dimensional optimization problem.
It is shown in this paper that an optimal solution of
the infinite-horizon sparse control problem
can be obtained through a solution of some finite-horizon subproblem.
This is due to sparsity of the optimal solution in the sense that
the optimal control input is constantly equal to zero at its tail.
An estimate is given on the horizon length required by this subproblem
and its adaptive choice is also discussed.
Implementation with a receding-horizon technique is considered
and its optimality and sparsity are guaranteed.
\\[0.2cm]%
{\bf Keywords:}
sparse control, optimal control, infinite horizon,
optimality condition, anti-stable part, receding-horizon technique.
\vspace*{-0.1cm}}

\maketitle
\thispagestyle{empty}

\section{Introduction}
\label{sec:intro}
Sparse optimal control is a control method suitable for saving energy.
Although conventional optimal control based on quadratic objective
function is mathematically favorable, it usually gives a dense
control input, which is nonzero for most of the time.
Using the $1$-norm of the input in the objective function,
we can enhance sparsity of an optimal input in the sense that
it is equal to zero for a long time duration.
Early results on sparse optimal control are found in \cite{AtF66,GaM12}.
In \cite{NQN16}, a sufficient condition was provided
for minimization of the sum of the input 1-norms to give maximum sparsity.
This result was further extended to a discrete-time system
\cite{NOQ16} and to an infinite-dimensional system \cite{IkN25}.
In \cite{Rao18}, the $1$-norm of the state was also introduced to
the objective function and a tradeoff was investigated
between the input sparsity and the state sparsity.
An efficient computation for solving sparse optimal control problems
was explored in \cite{AHW12,PoT20,DBQN22}.

In the literature, sparse optimal control has mostly been considered
in a finite horizon.
This is for reducing the problem to a finite-dimensional optimization
problem.
From a control perspective, however, it is not straightforward
to choose a finite horizon beforehand so that the control objective
is achieved there.
One way to apply a finite-horizon control input in the infinite horizon
is the use of the receding-horizon technique, that is,
repeatedly applying a finite-horizon control input with shifting its horizon.
This approach can be problematic however because
it may give a non-sparse control input as will be seen
in Example~\ref{exmpl:simple}.
In \cite{NQN16}, a self-triggered approach is proposed instead.
Although this approach gives a sparse control input,
its optimality in the infinite horizon is not clear.

In this paper,
infinite-horizon sparse optimal control is considered
in a discrete-time framework.
Here is a summary of its contribution.
\begin{enumerate}[(1)]
\item
It is shown that an optimal solution in the infinite horizon
can be obtained by solving some finite-horizon subproblem.
Indeed, if the horizon length is long enough,
an optimal control input of this subproblem is sparse
in the sense that it becomes zero after some finite time.
If one puts zeros at its tail,
the extended input remains optimal in a longer horizon.
The situation is the same even for its extension
to the infinite horizon.

\item
The horizon length required for the first subproblem is
estimated explicitly.
For the case where this estimate is conservative,
adaptive choice of the horizon length is also considered.

\item
The proposed approach can be combined with the receding-horizon technique.
The resulting control input is guaranteed to be
optimal and sparse in the infinite horizon.
\end{enumerate}

The rest of this paper is constructed as follows.
Section~\ref{sec:problem} gives an infinite-horizon optimal control
problem to be considered.
Section~\ref{sec:finite} discusses properties of
some finite-horizon optimal control problem.
Its optimal control input becomes optimal also in the
infinite-horizon if it is extended by zeros,
which is shown in Section~\ref{sec:solution}.
This property holds if the original horizon length is longer
than some specific number.
An adaptive way to choose a long enough horizon is discussed
in Section~\ref{sec:adaptive}.
A receding-horizon implementation is considered in Section~\ref{sec:receding}.
There, optimality and sparsity of the resulting control input are
guaranteed.
Section~\ref{sec:examples} provides two examples and
Section~\ref{sec:concl} concludes the paper.

Related results have been presented at conferences
\cite{OIN22,OIN23,OIN26} without proofs.
The present paper gives full proofs and first discusses
adaptive choice of the horizon length.
The spacecraft example in Section~\ref{sec:examples} also
first appears in this paper.
Moreover, a class of controlled plants is more general here
than in \cite{OIN23,OIN26} by the use of the factor $\alpha$.
Although this factor was used in \cite{OIN22}, too,
only a preliminary result was given there and
infinite-horizon optimality was not discussed.

The following notation is used.
For a vector $u=(u_1\ \ u_2\ \ \cdots\ \ u_m)^\rmT$,
its $1$-norm $\sum_{i=1}^m|u_i|$ and $\infty$-norm
$\max_{i=1, 2, \ldots, m}|u_i|$ are denoted by
$\|u\|_1$ and $\|u\|_\infty$, respectively.
For a matrix $B$, its norm induced by the $1$-norm, namely,
$\sup_{u\neq 0}\|Bu\|_1/\|u\|_1$, is expressed by $\|B\|_1$.
It is known that $\|B\|_1=\max_{i=1, 2, \ldots, m}\|b_i\|_1$
if $B$ consists of $m$ columns $b_1, b_2, \ldots, b_m$.
There hold $\|Bu\|_1\leq\|B\|_1\|u\|_1$ and
$\|p^\rmT B\|_\infty\leq\|p\|_\infty\|B\|_1$
for any vectors $u$ and $p$ having consistent dimensions.

\section{Considered Problem}
\label{sec:problem}
A plant to be controlled is a linear discrete-time system
\begin{equation}
x_\rmo(k+1)=A_\rmo x_\rmo(k)+B_\rmo u_\rmo(k)\ \ (k=0,\ 1,\ \ldots),\quad
x_\rmo(0)=\xi
\label{eq:original}
\end{equation}
with a state $x_\rmo(k)\in\Rset^n$ and an input $u_\rmo(k)\in\Rset^m$.
It is assumed that $(A_\rmo,B_\rmo)$ is stabilizable.

Let $\alpha\geq 1$ be some number such that $\alpha A_\rmo$ has
no eigenvalues on the unit circle in the complex plane.
Here the following infinite-horizon control problem is considered:
\begin{alignat*}{2}
P_\rmo:\ \ &\text{minimize}\ \ &&\sum_{k=0}^\infty\alpha^k\|u_\rmo(k)\|_1\\
&\text{subject to}\ \ &&x_\rmo(k+1)=A_\rmo x_\rmo(k)+B_\rmo u_\rmo(k)\ \ 
(k=0,\ 1,\ \ldots),\\
&&&x_\rmo(0)=\xi,\quad \alpha^k x_\rmo(k)\rightarrow 0\ \ (k\rightarrow\infty),\\
&&&\|u_\rmo(k)\|_\infty\leq 1\ \ (k=0,\ 1,\ \ldots).
\end{alignat*}

When $A_\rmo$ itself has no eigenvalues on the unit circle,
the standard choice $\alpha=1$ can be taken.
In this case,
the problem $P_\rmo$ is basically the same as the one considered
in the literature \cite{AtF66,NQN16,NOQ16,DBQN22}.
Indeed, the objective function is the sum of the input $1$-norms
in order to induce a sparse property of the input, \ie,
$u_\rmo(k)=0$ for many $k$'s.
The difference is that, while the problem has been considered in a finite
horizon in the literature, the problem $P_\rmo$ is in the infinite horizon.
Unless the time duration for control is fixed beforehand,
the infinite-horizon problem $P_\rmo$ appears more acceptable in practice.

When $A_\rmo$ has an eigenvalue on the unit circle,
the factor $\alpha$ has to be chosen larger than unity.
In this case, the higher penalty is posed on the input in the farther future
and exponential stability is asked on the closed-loop system.
This setting looks necessary in order to have the results
in this paper.
This issue will be discussed at the end of Section~\ref{sec:receding}.

With the transformation $x(k)=\alpha^k x_\rmo(k)$ and
$u(k)=\alpha^k u_\rmo(k)$,
the original plant dynamics \eqref{eq:original} is rewritten as
\begin{equation}
x(k+1)=Ax(k)+Bu(k)\ \ (k=0,\ 1,\ \ldots),\quad
x(0)=\xi
\label{eq:plant}
\end{equation}
for $A=\alpha A_\rmo$ and $B=\alpha B_\rmo$.
Obviously, $(A,B)$ is stabilizable and $A$ has no eigenvalues on
the unit circle.
The problem $P_\rmo$ is equivalently modified into
\begin{alignat*}{2}
P:\ \ &\text{minimize}\ \ &&\sum_{k=0}^\infty\|u(k)\|_1\\
&\text{subject to}\ \ &&x(k+1)=Ax(k)+Bu(k)\ \ 
(k=0,\ 1,\ \ldots),\\
&&&x(0)=\xi,\quad x(k)\rightarrow 0\ \ (k\rightarrow\infty),\\
&&&\|u(k)\|_\infty\leq\alpha^k\ \ (k=0,\ 1,\ \ldots).
\end{alignat*}
We consider the problem in this simpler form throughout this paper.

Even if the infinite-horizon problem is more natural from a control perspective,
it is not clear how to solve it unlike its finite-horizon counterpart.
A claim of this paper is that
an optimal solution of the problem $P$ can be obtained by
solving some finite-horizon subproblem,
which is introduced in the next section.

\begin{rem}
\label{rem:antistable}
If the matrix $A$ has its eigenvalues only inside the unit circle,
the problem $P$ (and the problem $P_\rmo$ as well)
is trivial and the zero input $u(k)=0$ $(k=0, 1, \ldots)$ is optimal.
Henceforth, we consider the case where $A$ has at least one eigenvalue
outside the unit circle.
\fin
\end{rem}

\begin{rem}
\label{rem:feasiblity}
The problem $P$ may not be feasible for an initial state $\xi$
distant from the origin due to boundedness of the input.
The region of $\xi$ for which the problem $P$ becomes feasible
is called the null controllable region and has been investigated
in \cite{HLQ02,HMQ02}.
We consider in the following the case where
$\xi$ is in this region and $P$ has a feasible solution
making the objective function finite.
\fin
\end{rem}

\begin{rem}
Although a solution of $P$ consists of an input $u(k)$ and a state
$x(k)$ for $k=0, 1, \ldots$,
our attention is paid on an input because
a state is uniquely determined for a given input.
We say a feasible input and an optimal input to mean
the input of a feasible solution and that of an optimal solution,
respectively.
\fin
\end{rem}

\section{Finite-Horizon Subproblem}
\label{sec:finite}
We introduce in this section some finite-horizon optimal control
problem of special importance in relation with the problem $P$.
Indeed, its optimal input turns out optimal also in $P$
if it is extended by putting zeros at its tail.

In order to introduce the finite-horizon problem, we first
decompose our plant \eqref{eq:plant}
into its stable part and anti-stable part.
That is, we consider a state transform
$x(k)=T(x_\rms(k)^\rmT\ \ x_\rma(k)^\rmT)^\rmT$
for some appropriate invertible matrix $T$ and
modify the representation \eqref{eq:plant} into the form
\[
\begin{pmatrix} x_\rms(k+1) \\ x_\rma(k+1) \end{pmatrix}=
\begin{pmatrix} A_\rms & O \\ O & A_\rma \end{pmatrix}
\begin{pmatrix} x_\rms(k) \\ x_\rma(k) \end{pmatrix}+
\begin{pmatrix} B_\rms \\ B_\rma \end{pmatrix}u(k)\ \
(k=0,\ 1, \ldots),\quad
\begin{pmatrix} x_\rms(0) \\ x_\rma(0) \end{pmatrix}=T^{-1}\xi,
\]
where $A_\rms$ has all of its eigenvalues inside the unit
circle and $A_\rma$ outside the unit circle.
The subsystem $(A_\rms,B_\rms)$ is called the {\em stable part} of the plant
and $(A_\rma,B_\rma)$ the {\em anti-stable part}.
The anti-stable part exists under the assumption made in Remark~\ref{rem:antistable}
while the stable part may not.
It is possible to assume $\|A_\rma^{-1}\|_1<1$
by an appropriate choice of $T$ (See Lemma~5.6.10 of \cite{HoJ85}).
The dimension of $x_\rma(k)$, the anti-stable part of the state, is
denoted by $n_\rma$.

We introduce the following finite-horizon problem for a positive integer $N$:
\begin{alignat*}{2}
F(N):\ \ &\text{minimize}\ \ &&\sum_{k=0}^{N-1}\|u(k)\|_1\\
&\text{subject to}\ \ 
&&x_\rma(k+1)=A_\rma x_\rma(k)+B_\rma u(k)\ \ (k=0,\ 1,\ \ldots,\ N-1),\\
&&&x_\rma(0)=(O\ \ I)T^{-1}\xi,\quad x_\rma(N)=0,\\
&&&\|u(k)\|_\infty\leq\alpha^k\ \ (k=0,\ 1,\ \ldots,\ N-1).
\end{alignat*}
Note that only the anti-stable part of the plant is considered
in the problem $F(N)$.
The positive integer $N$ is referred to as the {\em horizon length}.

On this finite-horizon subproblem $F(N)$, the following optimality condition
is available.
Here we write the $i$th component of $u(k)$ as $u_i(k)$ and
the $i$th column of $B_\rma$, the anti-stable part of the $B$ matrix, as $b_i$
for $i=1, 2, \ldots, m$.

\begin{lem}
\label{lem:optimality}
Suppose the subproblem $F(N)$ is feasible.
Then, the input $u^*(k)$ $(k=0,\ 1,\ \ldots,\ N-1)$ is optimal
in $F(N)$ if and only if
there exist a state $x_\rma^*(k)$ $(k=0,\ 1,\ \ldots,\ N)$ and
a costate $p_\rma^*(k)$ $(k=1,\ \ldots,\ N)$ satisfying
\begin{align}
x_\rma^*(k+1)&=A_\rma x_\rma^*(k)+B_\rma u^*(k)\ \
(k=0,\ 1,\ \ldots,\ N-1),\label{eq:opt_x}\\
x_\rma^*(0)&=(O\ \ I)T^{-1}\xi,\quad
x_\rma^*(N)=0,\label{eq:opt_x2}\\
p_\rma^*(k)^\rmT&=p_\rma^*(k+1)^\rmT A_\rma\ \
(k=1,\ 2,\ \ldots,\ N-1),\label{eq:opt_p}\\
u_i^*(k)&=\argmin_{|u_i(k)|\leq\alpha^k}\Bigl[
|u_i(k)|+p_\rma^*(k+1)^\rmT b_i u_i(k)\Bigr]\ \
(i=1,\ 2,\ \ldots,\ m;\ k=0,\ 1,\ \ldots,\ N-1).\label{eq:opt_u}
\end{align}
The input $u_i^*(k)$ satisfying the last equation has the following
properties:
\begin{alignat*}{2}
&p_\rma^*(k+1)^\rmT b_i<-1&&\text{\ \ implies\ \ }u_i^*(k)=\alpha^k;\\
&p_\rma^*(k+1)^\rmT b_i=-1&&\text{\ \ implies\ \ }0\leq u_i^*(k)\leq\alpha^k;\\
&-1<p_\rma^*(k+1)^\rmT b_i<1&&\text{\ \ implies\ \ }u_i^*(k)=0;\\
&p_\rma^*(k+1)^\rmT b_i=1&&\text{\ \ implies\ \ }-\alpha^k\leq u_i^*(k)\leq 0;\\
&p_\rma^*(k+1)^\rmT b_i>1&&\text{\ \ implies\ \ }u_i^*(k)=-\alpha^k.
\end{alignat*}
\end{lem}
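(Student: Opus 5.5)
Since $F(N)$ is a convex problem (convex objective, linear equality constraints, box constraints), I would prove Lemma~\ref{lem:optimality} by Lagrangian duality rather than by the discrete maximum principle. First I would eliminate the state. By \eqref{eq:opt_x} and \eqref{eq:opt_x2}, the terminal condition $x_\rma(N)=0$ is equivalent to the single linear equation
\[
A_\rma^N x_\rma(0)+\sum_{k=0}^{N-1}A_\rma^{N-1-k}B_\rma u(k)=0 .
\]
So $F(N)$ becomes: minimize $\sum_k\|u(k)\|_1$ over the compact box $U=\{\|u(k)\|_\infty\le\alpha^k\}$, subject to one affine equality constraint in $\Rset^{n_\rma}$. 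I would attach a multiplier $\lambda\in\Rset^{n_\rma}$ to this constraint and define $p_\rma^*(k+1)^\rmT=\lambda^\rmT A_\rma^{N-1-k}$. This gives $p_\rma^*(N)=\lambda$ and the costate recursion \eqref{eq:opt_p} automatically. The Lagrangian is then
\[
\lambda^\rmT A_\rma^N x_\rma(0)+\sum_{k=0}^{N-1}\sum_{i=1}^m\bigl[|u_i(k)|+p_\rma^*(k+1)^\rmT b_i u_i(k)\bigr],
\]
which separates over $(i,k)$.

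\textbf{Sufficiency.} Suppose $u^*$ satisfies \eqref{eq:opt_x}--\eqref{eq:opt_u}, and let $u$ be any feasible input. Summing the componentwise minimality \eqref{eq:opt_u} gives $L(u,\lambda)\ge L(u^*,\lambda)$. Because both $u$ and $u^*$ satisfy the equality constraint, the multiplier terms coincide, so the objective of $u$ is at least that of $u^*$. Hence $u^*$ is optimal. This direction needs no constraint qualification.

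\textbf{Necessity.} This is the step I expect to be the main obstacle, because it requires strong duality, i.e.\ the existence of $\lambda$. Since the only equality constraint is affine and the remaining set $U$ is a polyhedron, I would invoke the polyhedral (linear-constraint) version of the KKT/Lagrange multiplier theorem; this avoids needing a Slater point. Concretely, $F(N)$ can be rewritten as a linear program by splitting $u_i(k)=u_i^+-u_i^-$. Feasibility is assumed, and the objective is bounded below by $0$, so an optimum exists and LP duality supplies a multiplier $\lambda$ for the equality constraint. With that $\lambda$, $u^*$ minimizes $L(\cdot,\lambda)$ over $U$. Since $L$ is separable over $(i,k)$, each $u_i^*(k)$ minimizes its own scalar term, which is exactly \eqref{eq:opt_u}. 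The state $x_\rma^*$ is simply the trajectory generated by $u^*$, so it satisfies \eqref{eq:opt_x} and \eqref{eq:opt_x2}.

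\textbf{Case list.} The listed implications come from elementary analysis of the scalar function $g(v)=|v|+cv$ on $[-\alpha^k,\alpha^k]$, where $c=p_\rma^*(k+1)^\rmT b_i$. For $v\ge0$ we have $g(v)=(1+c)v$, and for $v\le0$ we have $g(v)=(c-1)v$. Reading off the sign of the slopes on each side of zero gives the five cases:
\begin{itemize}
\item $c<-1$: the minimizer is $\alpha^k$;
\item $c=-1$: $g$ is flat on $[0,\alpha^k]$, so the minimizers are $0\le v\le\alpha^k$;
\item $|c|<1$: the unique minimizer is $0$;
\item $c=1$: $g$ is flat on $[-\alpha^k,0]$, so the minimizers are $-\alpha^k\le v\le 0$;
\item $c>1$: the minimizer is $-\alpha^k$.
\end{itemize}
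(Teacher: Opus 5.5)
Your proof is correct. The sufficiency half uses the same idea as the paper's Appendix~\ref{pf:optimality}. The paper keeps the state, adds the zero term $\sum_{k}p_\rma^*(k+1)^\rmT[A_\rma x_\rma(k)+B_\rma u(k)-x_\rma(k+1)]$, and sums by parts, so that \eqref{eq:opt_p} and the fixed endpoints leave only the separable term. You instead eliminate the state and attach a single multiplier $\lambda$ to the terminal constraint.

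Your text only states the construction of $p$ from $\lambda$. For the sufficiency direction you need the converse, which is one line: any costate obeying \eqref{eq:opt_p} satisfies $p_\rma^*(k+1)^\rmT=p_\rma^*(N)^\rmT A_\rma^{N-1-k}$, so you take $\lambda=p_\rma^*(N)$. Say this explicitly.

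The real difference is necessity. The paper does not prove it; it cites the standard discrete-time optimality condition (a maximum principle). You derive it from LP duality. Your route is self-contained. Because all the data are linear or polyhedral, it needs no constraint qualification, and it gives the multiplier directly in normal form (coefficient $1$ on $|u_i(k)|$). A maximum principle for a fixed-endpoint problem would in general reach that form only after ruling out an abnormal multiplier.

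Two routine details are worth writing out:
\begin{enumerate}
\item Encode the box in the LP as $0\le u_i^\pm(k)\le\alpha^k$.
\item Justify that the dual multiplier serves the particular optimal $u^*$. Complementary slackness holds between any primal optimum and any dual optimum, so the pair of componentwise positive and negative parts of $u^*$ minimizes the LP's partial Lagrangian over that box. For $u\in U$, this Lagrangian evaluated at the positive and negative parts of $u$ equals your $L(u,\lambda)$, which gives $L(u,\lambda)\ge L(u^*,\lambda)$.
\end{enumerate}
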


\noindent
{\em Proof.}
Necessity is the well-known condition for optimality.
See Section~IV-D of \cite{NQN16} or Proposition~3.3.2 of \cite{Ber95a}.

Sufficiency follows from Proposition~3.3.4 of \cite{Ber95b}.
For completeness,
the proof is presented in a form adapted to the present context.
See Appendix~\ref{pf:optimality}.

The properties of $u^*_i(k)$ in the last statement easily follow
from the condition~\eqref{eq:opt_u}
\qed
\bigskip

Let $u(k)$ $(k=0, 1, \ldots, N-1)$ be a feasible input of $F(N)$.
Then, for any $N'>N$, we can have a feasible input of $F(N')$
by adding the zero input at the tail of $u(k)$, that is,
by defining $u(k)=0$ for $k=N, N+1, \ldots, N'-1$.
Indeed, since $x_\rma(N)=0$ and $u(k)=0$ after $k=N$,
the state $x_\rma(k)$ remains zero up to $k=N'$.
We call this extended input the {\em zero extension} of $u(k)$.

The next lemma claims that
an optimal input of the finite-horizon subproblem $F(N)$ is sparse
if the horizon length $N$ is long enough and
its zero extension is also optimal in a problem with a longer horizon.
To state the lemma, we need to define a positive integer $K$,
which is called the {\em zero-input time}.

Given the plant \eqref{eq:plant} and the transformation matrix $T$,
suppose that the subproblem $F(\ol{N})$ has a feasible solution
for some horizon length $\ol{N}$ and
that the feasible solution makes the objective function have a finite value $V$.
Let $k_0$ be a positive integer larger than or equal to $\ol{N}$
and satisfying $k_0\geq(\lceil V\rceil+1)n_\rma$,
where $\lceil V\rceil$ is the minimum integer larger than or equal to $V$.
On the other hand,
the assumed stabilizability of $(A,B)$ implies
controllability of $(A_\rma,B_\rma)$.
Since the controllability matrix $(B_\rma\ \ A_\rma B_\rma\ \ \cdots\ \
A_\rma^{n_\rma-1}B_\rma)$ has a full row rank,
there exists $g>0$ such that
$\|p^\rmT(B_\rma\ \ A_\rma B_\rma\ \ \cdots\ \ A_\rma^{n_\rma-1}B_\rma)\|_\infty
\geq\|p\|_\infty g$ holds for any vector $p$.
Indeed, the corresponding property is well-known for the $2$-norm,
which is equivalent to the $\infty$-norm
in a finite-dimensional vector space.
With this $g$, let $k_1$ be a nonnegative integer such that
$g>\|A_\rma^{-1}\|_1^{k_1+1}\|B_\rma\|_1$.
Such a $k_1$ exists due to $\|A_\rma^{-1}\|_1<1$.
Now define the zero-input time $K$ by $k_0+k_1$.

\begin{lem}
\label{lem:sparsity}
Given the plant \eqref{eq:plant} and the transformation matrix $T$,
suppose that the subproblem $F(\ol{N})$ is feasible
for some horizon length $\ol{N}$.
Then, if we choose a new horizon length $N$ longer than the zero-input time $K$,
the corresponding finite-horizon subproblem $F(N)$ is feasible and
its optimal solution satisfies
$u^*(k)=0$ for $k=K,\ K+1,\ \ldots,\ N-1$.
Moreover, for any $N'>N$,
the zero extension of this input to the length $N'$ is optimal
in the subproblem $F(N')$.
\end{lem}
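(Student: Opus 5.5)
The plan is to work entirely with the optimality condition of Lemma~\ref{lem:optimality}. The idea is to show that the costate becomes small early in the horizon and then keeps shrinking under $p_\rma^*(k+1)^\rmT=p_\rma^*(k)^\rmT A_\rma^{-1}$. This forces $|p_\rma^*(k+1)^\rmT b_i|<1$, and hence $u_i^*(k)=0$, for all $k\geq K$.

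\emph{Feasibility and existence.} Since $N>K\geq k_0\geq\ol{N}$, the zero extension of the given feasible input of $F(\ol{N})$ is feasible in $F(N)$. It has objective value $V$, so the optimal value of $F(N)$ is at most $V$. The problem $F(N)$ is a feasible, bounded linear program, so an optimal input $u^*$ exists. By Lemma~\ref{lem:optimality} it comes with a costate $p_\rma^*(k)$ $(k=1,\ldots,N)$.

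\emph{Small costate at some time $\leq k_0$.} Split the times $0,1,\ldots,(\lceil V\rceil+1)n_\rma-1$ into $\lceil V\rceil+1$ consecutive blocks of length $n_\rma$; all of them lie within $0,\ldots,k_0-1\subset\{0,\ldots,N-1\}$. For a block $\{j,\ldots,j+n_\rma-1\}$, the costate recursion gives $p_\rma^*(j+l+1)^\rmT b_i=p_\rma^*(j+n_\rma)^\rmT A_\rma^{n_\rma-1-l}b_i$. Hence
\[
\max_{l,i}|p_\rma^*(j+l+1)^\rmT b_i|=\|p_\rma^*(j+n_\rma)^\rmT(B_\rma\ \ A_\rma B_\rma\ \ \cdots\ \ A_\rma^{n_\rma-1}B_\rma)\|_\infty\geq g\|p_\rma^*(j+n_\rma)\|_\infty .
\]
Suppose this maximum exceeds $1$ strictly. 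Then by the properties in Lemma~\ref{lem:optimality} some $u_i^*(k)$ in that block equals $\pm\alpha^k$, which costs at least $\alpha^k\geq1$. If this happened in every block, the total cost would be at least $\lceil V\rceil+1>V$, a contradiction. So some block has maximum $\leq1$, which gives $\|p_\rma^*(k')\|_\infty\leq 1/g$ with $k'=j+n_\rma\leq k_0$. I expect this counting step to be the main point needing care. The strict inequality is essential, since in the equality case the input may be zero at no cost. It is also essential that the bound $|u|\geq\alpha^k\geq1$ uses $\alpha\geq1$.

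\emph{Decay and sparsity.} Let $k\geq K$, so that $k+1-k'\geq k_1+1$. Then
\[
|p_\rma^*(k+1)^\rmT b_i|=|p_\rma^*(k')^\rmT A_\rma^{-(k+1-k')}b_i|
\leq\|p_\rma^*(k')\|_\infty\|A_\rma^{-1}\|_1^{k+1-k'}\|B_\rma\|_1
\leq g^{-1}\|A_\rma^{-1}\|_1^{k_1+1}\|B_\rma\|_1<1 .
\]
By Lemma~\ref{lem:optimality}, this gives $u^*(k)=0$ for $k=K,\ldots,N-1$.

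\emph{Zero extension.} For $N'>N$, define the costate by $p'(k)=p_\rma^*(k)$ for $k\leq N$ and $p'(k)^\rmT=p_\rma^*(N)^\rmT A_\rma^{-(k-N)}$ for $N<k\leq N'$. It satisfies \eqref{eq:opt_p} on $1,\ldots,N'-1$. The zero-extended state stays at $0$ after time $N$, so \eqref{eq:opt_x} and \eqref{eq:opt_x2} hold with $x_\rma(N')=0$. Condition \eqref{eq:opt_u} is unchanged for $k<N$. For $k\geq N>K$, the same decay estimate from $k'$ gives $|p'(k+1)^\rmT b_i|<1$, so $u=0$ is the argmin. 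The sufficiency part of Lemma~\ref{lem:optimality} then shows that the zero extension is optimal in $F(N')$.
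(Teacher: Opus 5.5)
Your proof is correct and follows essentially the same route as the paper's. You use a pigeonhole argument over $\lceil V\rceil+1$ blocks of length $n_\rma$ to find a time $k'\leq k_0$ with $\|p_\rma^*(k')\|_\infty\leq 1/g$, then the decay $p_\rma^*(k+1)^\rmT=p_\rma^*(k')^\rmT A_\rma^{-(k+1-k')}$ to force $u^*(k)=0$ for $k\geq K$, and finally extend the costate forward by $A_\rma^{-1}$ to verify the conditions of Lemma~\ref{lem:optimality} for the zero extension. The differences from the paper are only cosmetic: your blocks start at time $0$ rather than ending at $k_0-1$, your counting step is the contrapositive phrased through the costate rather than the input, and you add the justification that an optimizer of $F(N)$ exists because it is a feasible bounded linear program.
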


\noindent
{\em Proof.}
See Appendix~\ref{pf:sparsity}.
\qed
\bigskip

Looking at Lemma~\ref{lem:sparsity},
it is natural to expect that zero extension of an optimal input
of the finite-horizon subproblem $F(N)$ may give
an optimal input of the infinite-horizon problem $P$.
This is indeed true and will be shown in the next section.

\section{Solution of the Infinite-Horizon Problem}
\label{sec:solution}
Having an optimal input of the finite-horizon subproblem $F(N)$
for some $N>K$ and extending it with the zero input,
we can obtain a feasible input of the infinite-horizon problem $P$.
To see this, note that $x_\rma^*(N)=0$ and $u^*(k)=0$ for $k=N,\ N+1,\ \ldots$,
which implies the state of the anti-stable part, $x_\rma^*(k)$,
remains zero after $k=N$.
On the other hand, the state of the stable part, $x_\rms^*(k)$,
approaches zero after $k=N$ due to the zero input.
For the moment, it is not clear whether the input obtained like this
is optimal or not in the infinite-horizon problem $P$.
We will consider this optimality in this section.

Among feasible inputs for the problem $P$, we need to focus
only on inputs of finite support.

\begin{lem}
\label{lem:finite}
Suppose the infinite-horizon problem $P$ has a feasible solution
that makes the objective function value finite and
write the input of this feasible solution as $u(k)$ $(k=0, 1, \ldots)$.
Then, there exists another feasible input $\ol{u}(k)$ $(k=0,\ 1,\ \ldots)$
that satisfies
$\ol{u}(k)=0$ $(k=K_u,\ K_u+1,\ \ldots)$ for some nonnegative integer $K_u$
and makes the objective function value smaller than or equal to
that of $u(k)$.
\end{lem}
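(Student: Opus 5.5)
The plan is to use Lemma~\ref{lem:sparsity}. From the given infinite-horizon input $u(k)$ we build a feasible input of a finite-horizon subproblem $F(N)$. We then solve that subproblem, zero-extend its optimal input, and compare costs.

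\emph{Step 1: truncation.} The anti-stable state $x_\rma(k)=(O\ \ I)T^{-1}x(k)$ of $u$ satisfies $x_\rma(k)\to 0$ because $x(k)\to 0$. Moreover, since $V_\infty:=\sum_{k=0}^\infty\|u(k)\|_1<\infty$, the tail sums tend to zero. The state of $F$ can be written backward as
\[
x_\rma(k)=-\sum_{j=k}^\infty A_\rma^{-(j-k)-1}B_\rma u(j),
\]
which is valid since $x_\rma(j)\to 0$ and $\|A_\rma^{-1}\|_1<1$. This gives $\|x_\rma(k)\|_1\le c\sum_{j\ge k}\|u(j)\|_1\to 0$, with $c=\|A_\rma^{-1}\|_1\|B_\rma\|_1/(1-\|A_\rma^{-1}\|_1)$.

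\emph{Step 2: steering the residual state to zero.} Take a time $M$ and apply $u$ on $k<M$. On $k=M,\ldots,M+n_\rma-1$, steer $x_\rma(M)$ to zero using the minimum-norm input given by the right inverse of the controllability matrix $(B_\rma\ \ A_\rma B_\rma\ \ \cdots\ \ A_\rma^{n_\rma-1}B_\rma)$, which has full row rank. The steering input has $1$-norm at most $C\|x_\rma(M)\|_1$ for a constant $C$. Because $\alpha^k\ge 1$, this input satisfies the bound $\|u(k)\|_\infty\le\alpha^k$ once $\|x_\rma(M)\|_1$ is small, which holds for large $M$. So $F(\ol N)$ is feasible with $\ol N=M+n_\rma$, and its cost is at most $\sum_{k<M}\|u(k)\|_1+C\|x_\rma(M)\|_1$.

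\emph{Step 3: applying Lemma~\ref{lem:sparsity}.} This step, and the main obstacle, is that the cost comparison must come out non-strict. With $F(\ol N)$ feasible, take $N>K$. Lemma~\ref{lem:sparsity} gives an optimal input $u^*$ of $F(N)$ that vanishes for $k\ge K$, and its zero extension is optimal in every $F(N')$ with $N'>N$. For each large $M$ as above, the input built in Step~2 is feasible in $F(N')$ for $N'\ge\max(N,M+n_\rma)$. Hence
\[
\sum_{k<K}\|u^*(k)\|_1\le\sum_{k<M}\|u(k)\|_1+Cc\sum_{j\ge M}\|u(j)\|_1.
\]
Letting $M\to\infty$ gives $\sum\|u^*\|_1\le V_\infty$. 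The key point is that $u^*$ does not depend on $M$. This is why Lemma~\ref{lem:sparsity}'s zero-extension optimality is needed, rather than only comparing against a single truncation, which would give an $\varepsilon$ loss.

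\emph{Step 4: feasibility in $P$.} Set $\ol u(k)=u^*(k)$ for $k<K$ and $\ol u(k)=0$ afterwards, so $K_u=K$. The anti-stable part satisfies $x_\rma(N)=0$ and stays zero. The stable part decays under zero input because $A_\rms$ is Schur. The bounds $\|\ol u(k)\|_\infty\le\alpha^k$ hold, so $\ol u$ is feasible in $P$ with cost at most that of $u$.
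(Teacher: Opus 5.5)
Your proof is correct, and it is not circular, since Lemma~\ref{lem:sparsity} is proved in Appendix~\ref{pf:sparsity} using only Lemma~\ref{lem:optimality}. It does, however, take a genuinely different route from the paper.

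\emph{The paper's construction.} The paper's proof of Lemma~\ref{lem:finite} never solves an optimization problem. It picks an early window $\ol{k},\ldots,\ol{k}+n_\rma-1$ on which $\|u(k)\|_\infty\le 1/2$ and cuts $u$ off at a much later time $K_u$. It then cancels the residual $x_\rma(K_u)$ with perturbations $\delta_j$ placed back in that early window. The required correction is pulled back through the contraction $A_\rma^{-(K_u-\ol{k}-n_\rma)}$, so its cost is at most $\|x_\rma(K_u)\|_1/\|B_\rma\|_1$. The backward representation of $x_\rma(K_u)$, which is your Step~1, bounds this by the discarded tail $\sum_{k\ge K_u}\|u(k)\|_1$. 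The non-strict inequality therefore comes out exactly, from one explicit modification of $u$.

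\emph{Your construction.} You steer at time $M$ itself, at cost $C\|x_\rma(M)\|_1$ with an uncontrolled constant $C$. Each truncated competitor may therefore cost slightly more than $u$. You recover the non-strict inequality through the limit $M\to\infty$, using the fact that Lemma~\ref{lem:sparsity} makes the optimal value of $F(N')$ constant for $N'\ge N>K$. This is exactly the role you identified for the $M$-independence of $u^*$.

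\emph{What each buys.} The paper's argument is elementary and self-contained: it needs neither existence of optimal solutions nor the costate argument. That is why Lemma~\ref{lem:finite} can serve as an independent ingredient that Theorem~\ref{thm:solvability} then combines with Lemma~\ref{lem:sparsity}. Your argument relies on the heavier Lemma~\ref{lem:sparsity}, but it yields more. Your $\ol u$ is the zero extension of an optimal input of $F(N)$, and $K_u$ can be taken equal to $K$. Moreover, once $K$ is fixed, the stabilized optimal value of $F(N')$ does not depend on which feasible input of $P$ you compare against. Rerunning your Steps~1--3 with any other feasible input therefore already proves the optimality claim of Theorem~\ref{thm:solvability} directly, so Lemma~\ref{lem:finite} is no longer needed as an intermediate step.
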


\noindent
{\em Proof.}
See Appendix~\ref{pf:finite}.
\qed
\bigskip

Combining this result with Lemma~\ref{lem:sparsity},
we arrive at the next theorem, which is the main result of this paper.

\begin{thm}
\label{thm:solvability}
Suppose the infinite-horizon problem $P$ has a feasible solution
making an objective function value finite.
Then, the zero-input time $K$ can be defined and,
for a horizon length $N>K$,
the finite-horizon subproblem $F(N)$ has an optimal solution.
Moreover, its zero extension is optimal in the infinite-horizon problem $P$.
\end{thm}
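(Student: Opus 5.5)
We combine Lemma~\ref{lem:finite} with Lemma~\ref{lem:sparsity}, using the fact that a finitely supported feasible input of $P$ restricts to a feasible input of $F(N)$ for suitable $N$.

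\emph{Step~1: $K$ is defined.} Take the feasible input $u(k)$ of $P$ with finite objective value. By Lemma~\ref{lem:finite}, replace it by a feasible $\ol{u}(k)$ with $\ol{u}(k)=0$ for $k\geq K_u$ and no larger cost. In $P$ we have $x(k)\to 0$, so the anti-stable component $x_\rma(k)$ also tends to zero. For $k\geq K_u$ the anti-stable dynamics are free, so $x_\rma(k)=A_\rma^{k-K_u}x_\rma(K_u)$. Since $A_\rma$ has all eigenvalues outside the unit circle, this tends to zero only if $x_\rma(K_u)=0$. Set $\ol{N}=\max(K_u,1)$; if $K_u=0$, the condition $x_\rma(0)=0$ with a zero input still gives feasibility. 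Then the restriction of $\ol{u}$ to $k<\ol{N}$ is feasible in $F(\ol{N})$ with finite value. Hence $k_0$, $k_1$ and $K$ can be defined as in Section~\ref{sec:finite}.

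\emph{Step~2: existence of an optimal solution of $F(N)$ for $N>K$.} Lemma~\ref{lem:sparsity} gives feasibility of $F(N)$. Its feasible set is a nonempty compact polytope, because it is cut out by linear equalities and box constraints, and the objective is continuous. So an optimal $u^*$ exists, and by Lemma~\ref{lem:sparsity} it satisfies $u^*(k)=0$ for $K\leq k\leq N-1$. Its zero extension is feasible in $P$: $x_\rma^*$ stays at zero after $N$, and $x_\rms^*(k)=A_\rms^{k-N}x_\rms^*(N)\to 0$. Its cost equals the optimal value $V^*_N$ of $F(N)$.

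\emph{Step~3: optimality in $P$.} This is the main point. Let $v$ be any feasible input of $P$ with finite cost $J(v)$; inputs with infinite cost are trivially no better. By Lemma~\ref{lem:finite}, there is a feasible $\ol{v}$ with $J(\ol{v})\leq J(v)$ and $\ol{v}(k)=0$ for $k\geq K_v$. The argument of Step~1 gives $x_\rma(K_v)=0$. Choose $N'\geq\max(N,K_v)$, with $N'>N$ if $K_v>N$. The restriction of $\ol{v}$ to $[0,N')$ is feasible in $F(N')$, because the anti-stable state stays zero from $K_v$ on. By Lemma~\ref{lem:sparsity}, the zero extension of $u^*$ to length $N'$ is optimal in $F(N')$, so its cost $V^*_N$ satisfies $V^*_N\leq J(\ol{v})\leq J(v)$. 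Therefore the zero extension of $u^*$ is optimal in $P$.

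The delicate step is the claim, used in Steps~1 and~3, that a finitely supported feasible input of $P$ drives $x_\rma$ exactly to zero at the end of its support. It relies on $x(k)\to 0$ together with the absence of anti-stable eigenvalues of modulus at most one. Everything else follows directly from the earlier lemmas.
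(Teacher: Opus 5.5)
Your proof is correct and follows the paper's route. You use Lemma~\ref{lem:finite} to pass to a finitely supported input, then the fact that the anti-stable state vanishes at the end of the support to get feasibility of $F(K_u)$ and define $K$, then Lemma~\ref{lem:sparsity} to compare costs. Two of your steps make precise what the paper leaves implicit when it simply says ``choose $N$ larger than $K_u$'': your compactness argument for the existence of $u^*$, and your auxiliary horizon $N'\geq\max(N,K_v)$ combined with the zero-extension part of Lemma~\ref{lem:sparsity}.
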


\noindent
{\em Proof.}
Let $u(k)$ $(k=0,\ 1,\ \ldots)$ be any feasible input of the problem $P$
having a finite objective function value.
Lemma~\ref{lem:finite} implies the existence of another feasible input
$\ol{u}(k)$ $(k=0,\ 1,\ \ldots)$ such that
$\ol{u}(k)=0$ for $k=K_u,\ K_u+1,\ \ldots$ and
the objective function value of $\ol{u}(k)$ is
smaller than or equal to that of $u(k)$.
When this input $\ol{u}(k)$ is applied to the plant,
the corresponding state $\ol{x}(k)$ has to go to zero,
which means its anti-stable part has to satisfy $\ol{x}_\rma(K_u)=0$.
This implies that the finite-horizon subproblem $F(K_u)$ is
feasible and thus the assumption of Lemma~\ref{lem:sparsity}
is satisfied for $\ol{N}=K_u$.
Of course, the assumption may be satisfied for a smaller $\ol{N}$.
In any case, we can define the zero-input time $K$
as described before Lemma~\ref{lem:sparsity}.
Lemma~\ref{lem:sparsity} then states that, for any $N>K$,
any optimal input of $F(N)$, denoted by $u^*(k)$ $(k=0, 1, \ldots, N-1)$,
is sparse in the sense that
$u^*(k)=0$ for $k=K,\ K+1,\ \ldots,\ N-1$.
Choose $N$ larger than $K_u$.
If we compare $u^*(k)$ and $\ol{u}(k)$ in the interval $k=0,\ 1,\ \ldots,\ N-1$,
the input $u^*(k)$ makes the objective function value smaller
because of its optimality.
If we extend $u^*(k)$ to the infinite length,
it is feasible in the problem $P$ and makes the objective function value
smaller than or equal to $\ol{u}(k)$ and thus $u(k)$.
Since $u(k)$ is arbitrary, $u^*(k)$ is optimal in $P$.
\qed
\bigskip

This theorem shows how we can obtain an optimal control input for
the infinite-horizon optimal control problem $P$.
That is, first we consider a finite-horizon subproblem $F(N)$ choosing
the horizon length $N$ longer than the zero-input time $K$;
We obtain its optimal solution and consider its zero extension;
Then, this is optimal in the infinite-horizon problem $P$.

The definition of the zero-input time $K$ depends on
the feasibility of the subproblem $F(\ol{N})$.
Noting the convexity of $F(\ol{N})$, we can say the following.
That is, if we choose a polyhedron in the state space and the subproblem
corresponding to each vertex of the polyhedron has a feasible solution
whose objective function value is less than or equal to $V$,
then, for any state in the polyhedron, the corresponding subproblem
has the same property.
This means that the same value of $K$ can be used in this polyhedron
as the zero-input time.
Moreover, suppose we pick up some feasible solution of $P$ and consider
the state trajectory for this feasible control input.
Then for each state in this trajectory
the corresponding subproblem $F(\ol{N})$ is feasible for some $\ol{N}$
and its objective function value does not exceed that of $P$.
This means that we do not need to increase
the zero-input time $K$ as far as the state evolves along this trajectory.

\section{Adaptive Choice of the Horizon Length}
\label{sec:adaptive}
We saw in the previous section that,
if we choose the horizon length $N$ longer than the zero-input time $K$,
an optimal input of a finite-horizon subproblem $F(N)$ gives
an optimal input also in a subproblem with a finite but longer horizon
and then in the infinite-horizon problem $P$ by zero extension.
Although the value of $K$ can be computed as defined in Section~\ref{sec:finite},
this definition may be conservative and may give an unnecessarily large value.
It may be more practical to adaptively choose the horizon length $N$
so that zero extension of an optimal input becomes meaningful.
The next result is useful for this purpose.

\begin{thm}
\label{thm:adaptive}
Suppose that a finite-horizon subproblem $F(N)$ has
an optimal solution whose associated costate satisfies
\[
\|p_\rma^*(N)\|_\infty\|B_\rma\|_1<1.
\]
Then the zero extension of the corresponding optimal input is optimal in $F(N')$
for any $N'>N$ and so is in $P$.
\end{thm}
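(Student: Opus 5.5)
We prove the first claim through the sufficiency part of Lemma~\ref{lem:optimality}: we extend the optimal input, state and costate of $F(N)$ to the horizon $N'$ and check that the extended triple satisfies \eqref{eq:opt_x}--\eqref{eq:opt_u} for $F(N')$. The second claim then follows by the same comparison argument used in the proof of Theorem~\ref{thm:solvability}.

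\textbf{Constructing the extension.} Let $u^*(k)$, $x_\rma^*(k)$ and $p_\rma^*(k)$ be the optimal input, state and costate of $F(N)$ given by the necessity part of Lemma~\ref{lem:optimality}. For $k=N,\ldots,N'-1$ set $u^*(k)=0$ and $x_\rma^*(k+1)=0$; the state equation holds because $x_\rma^*(N)=0$. For the costate we keep the backward recursion \eqref{eq:opt_p} going forward: for $k=N+1,\ldots,N'$ define
\[
p_\rma^*(k)^\rmT=p_\rma^*(N)^\rmT A_\rma^{-(k-N)}.
\]
The recursion \eqref{eq:opt_p} then holds for all $k=1,\ldots,N'-1$. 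The old values on $k\le N$ are untouched, so \eqref{eq:opt_x}, \eqref{eq:opt_x2} and \eqref{eq:opt_u} still hold for $k\le N-1$.

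\textbf{Checking the minimization condition on the new steps.} It remains to verify \eqref{eq:opt_u} for $k=N,\ldots,N'-1$ with $u_i^*(k)=0$. By the listed properties of $u_i^*(k)$, this holds as soon as $|p_\rma^*(k+1)^\rmT b_i|<1$. We bound
\[
|p_\rma^*(k+1)^\rmT b_i|
\le\|p_\rma^*(N)^\rmT A_\rma^{-(k+1-N)}B_\rma\|_\infty
\le\|p_\rma^*(N)\|_\infty\|A_\rma^{-1}\|_1^{\,k+1-N}\|B_\rma\|_1 .
\]
This uses the inequality $\|p^\rmT B\|_\infty\le\|p\|_\infty\|B\|_1$ from the introduction and submultiplicativity of the induced norm. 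Since $\|A_\rma^{-1}\|_1<1$, the right-hand side is at most $\|p_\rma^*(N)\|_\infty\|B_\rma\|_1<1$, which is exactly the hypothesis. Sufficiency in Lemma~\ref{lem:optimality} then gives optimality of the zero extension in $F(N')$.

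\textbf{Optimality in $P$.} Feasibility of the zero extension in $P$ was noted at the start of Section~\ref{sec:solution}: the anti-stable part stays at zero and the stable part decays. For optimality, let $u$ be any feasible input of $P$ with finite cost. By Lemma~\ref{lem:finite} there is $\ol{u}$ with support in $[0,K_u)$ and cost no larger than that of $u$. As in the proof of Theorem~\ref{thm:solvability}, $\ol{x}_\rma(K_u)=0$, so $\ol{u}$ restricted to any horizon $N'>\max(N,K_u)$ is feasible in $F(N')$. Optimality of the zero extension in $F(N')$ shows its cost is at most that of $\ol{u}$, hence at most that of $u$.

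The main point needing care is the costate extension. Its index range must match Lemma~\ref{lem:optimality}, where $p$ runs over $1,\ldots,N'$ and \eqref{eq:opt_u} uses $p(k+1)$. We also need that no condition is imposed on the terminal costate at $N'$; this holds because the terminal constraint is the hard equality $x_\rma(N')=0$, so the costate at $N'$ is free.
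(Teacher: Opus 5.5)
Your proof is correct and follows the paper's argument essentially step by step. You use the same costate extension $p_\rma^*(k)^\rmT=p_\rma^*(N)^\rmT A_\rma^{-(k-N)}$, the same bound via $\|A_\rma^{-1}\|_1<1$ giving $|p_\rma^*(k+1)^\rmT b_i|<1$ on the new steps, and the sufficiency part of Lemma~\ref{lem:optimality}; for optimality in $P$ you write out the comparison with Lemma~\ref{lem:finite} that the paper only indicates as ``similarly to Theorem~\ref{thm:solvability}''.
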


\noindent
{\em Proof.}
Let $u^*(k)$ $(k=0, 1, \ldots, N-1)$ be the considered optimal input
and let $x_\rma^*(k)$ $(k=0, 1, \ldots, N)$ and
$p_\rma^*(k)$ $(k=1, 2, \ldots, N)$ be the associated state and costate,
respectively.
They satisfy the conditions \eqref{eq:opt_x}--\eqref{eq:opt_u}
by Lemma~\ref{lem:optimality}.
We extend these variables to the length $N'$ by
\begin{align*}
u^*(k)&=0\quad(k=N, N+1, \ldots, N'-1),\\
x_\rma^*(k+1)&=A_\rma x_\rma^*(k)\quad(k=N, N+1, \ldots, N'-1),\\
p_\rma^*(k)^\rmT&=p_\rma^*(k+1)^\rmT A_\rma\quad
(k=N, N+1, \ldots, N'-1).
\end{align*}
They obviously satisfy the optimality conditions
\eqref{eq:opt_x}--\eqref{eq:opt_p}.
If they satisfy in addition the condition \eqref{eq:opt_u}
for $k=N$ to $N'-1$,
this extended input and state form an optimal solution of $F(N')$.

However, for any $k=N, N+1, \ldots, N'-1$, we have
\[
\|p_\rma^*(k+1)^\rmT B_\rma\|_\infty
=\|p_\rma^*(N)^\rmT A_\rma^{-k-1+N} B_\rma\|_\infty
\leq\|p_\rma^*(N)\|_\infty\|A_\rma^{-1}\|_1^{k+1-N}\|B_\rma\|_1
<\|p_\rma^*(N)\|_\infty\|B_\rma\|_1
<1,
\]
which implies $|p_\rma^*(k+1)^\rmT b_i|<1$ for each $i=1, 2, \ldots, m$.
Thus, the condition \eqref{eq:opt_u} is satisfied
with $u^*(k)=0$ for $k=N, N+1, \ldots, N'-1$.

Optimality in $P$ can be shown similarly to Theorem~\ref{thm:solvability}.
\qed
\bigskip

\section{Receding-Horizon Implementation}
\label{sec:receding}
The receding-horizon technique is widely used in practice
in order to utilize a solution of a finite-horizon problem
in the infinite horizon.
It is natural to combine our proposed method with this technique,
which actually exhibits good properties.

We use the receding-horizon technique in the following way.
Suppose that our infinite-horizon problem $P$ has a feasible
solution that makes the objective function value finite.
Formulate its finite-horizon subproblem with the horizon length $N$
longer than the zero-input time $K$.
We write this problem as $F(N,\xi)$ explicitly showing the dependence on
the initial state $\xi$.
Solve this problem and let the obtained optimal input be
$\uind{0}(k)$ $(k=0, 1, \ldots, N-1)$.
Apply its first value $\uind{0}(0)$ to the plant to have the state
$\xind{0}(1)=A\xi+B\uind{0}(0)$.
Then we formulate a new subprogram $F(N,\xind{0}(1))$ with this state
and obtain its optimal solution $\uind{1}(k)$ $(k=1, 2, \ldots, N)$.
Here the time index $k$ is shifted by one in order to be consistent with
the current set-up.
Apply the first value $\uind{1}(1)$ of the input to the plant
to have the next state
$\xind{1}(2)=A\xind{0}(1)+B\uind{1}(1)$.
Then we formulate a new subproblem $F(N,\xind{1}(2))$ for this state
and obtain an optimal control input $\uind{2}(k)$
$(k=2, 3, \ldots, N+1)$.
This procedure is repeated.

The resulting control input is $\uind{k}(k)$ $(k=0, 1, \ldots)$.
It has the following properties.

\begin{thm}
\label{thm:receding}
Suppose our infinite-horizon problem $P$ has a feasible solution
that makes the objective function value finite.
With the horizon length $N$ longer than the zero-input time $K$,
we produce the control input $\uind{k}(k)$ $(k=0, 1, \ldots)$ as described above.
Then, it is optimal in the infinite-horizon problem $P$.
Moreover, it is sparse in the sense that
$\uind{k}(k)=0$ for $k=K, K+1, \ldots$.
\end{thm}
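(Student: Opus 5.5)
The plan is an induction showing that, at every stage, the input applied so far followed by the current subproblem's optimal input is optimal in $P$ for the original initial state $\xi$. Sparsity then follows from Lemma~\ref{lem:sparsity}, and optimality of the receding-horizon input follows because it eventually coincides with one of these optimal inputs.

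For $k=0,1,\ldots$ define $w^{(k)}$ on $0,1,2,\ldots$ by
\[
w^{(k)}(j)=\uind{j}(j)\ (j<k),\qquad
w^{(k)}(j)=\uind{k}(j)\ (k\leq j\leq k+N-1),\qquad
w^{(k)}(j)=0\ (j\geq k+N).
\]
The claim is that each $w^{(k)}$ is feasible and optimal in $P$. Feasibility holds because the anti-stable part of the state satisfies $x_\rma(k+N)=0$ by the terminal constraint of $F(N,\xind{k-1}(k))$, after which the input is zero. The shifted indexing ensures that the bound $\|u(j)\|_\infty\leq\alpha^j$ is imposed at the absolute time $j$. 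The base case $k=0$ is exactly Theorem~\ref{thm:solvability}, since $N>K$.

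For the induction step, suppose $w^{(k)}$ is optimal in $P$. The state it generates at time $k+1$ is $\xind{k}(k+1)$, because $w^{(k)}(k)=\uind{k}(k)$. The restriction of $w^{(k)}$ to $j=k+1,\ldots,k+N$ is feasible in $F(N,\xind{k}(k+1))$: it satisfies the input bounds at the same absolute times, its value at $k+N$ is $0$, and the anti-stable state vanishes at time $k+N$ and stays zero thereafter, giving the terminal condition at $k+1+N$. Hence the optimal input $\uind{k+1}$ of that subproblem has cost no larger than this tail. Since $w^{(k+1)}$ and $w^{(k)}$ agree up to time $k$ and $w^{(k)}$ vanishes from $k+N$ on, the cost of $w^{(k+1)}$ is at most that of $w^{(k)}$. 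Combined with feasibility of $w^{(k+1)}$, this shows $w^{(k+1)}$ is optimal in $P$. This argument uses no principle-of-optimality statement for $P$ and needs no recomputation of $K$ at the shifted initial states. Those two points are where I expect the main difficulty, and the direct tail-comparison is designed to sidestep them.

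Sparsity is next. Each $w^{(k)}$ vanishes from time $k+N$ on, so its restriction to $0,\ldots,k+N-1$ is feasible in $F(k+N,\xi)$. Since $w^{(k)}$ is optimal in $P$ and any feasible input of $F(k+N,\xi)$ zero-extends to a feasible input of $P$ with the same cost, this restriction is optimal in $F(k+N,\xi)$. Because $k+N>K$, Lemma~\ref{lem:sparsity} gives $w^{(k)}(j)=0$ for $K\leq j\leq k+N-1$. Taking $j=k\geq K$ yields $\uind{k}(k)=w^{(k)}(k)=0$.

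Finally, the receding-horizon input $\uind{j}(j)$ coincides with $w^{(K)}$. For $j<K$ this holds by definition, and at $j=K$ because $w^{(K)}(K)=\uind{K}(K)$. For $j>K$ both vanish, by the sparsity just shown and by Lemma~\ref{lem:sparsity} applied to $w^{(K)}$ on $K\leq j\leq K+N-1$ (and by definition for $j\geq K+N$). Hence the receding-horizon input is optimal in $P$.
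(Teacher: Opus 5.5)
Your proof is correct and follows essentially the same route as the paper: both form the concatenated input $\uind{0}(0),\ldots,\uind{k-1}(k-1),\uind{k}(k),\ldots,\uind{k}(k+N-1)$, show by induction that it remains optimal, and then apply Lemma~\ref{lem:sparsity} to $F(k+N,\xi)$ with the original zero-input time $K$. The differences are minor: you carry optimality in $P$ rather than in $F(N+k,\xi)$, you use feasibility of the shifted tail instead of the principle of optimality, and you treat $k>K$ explicitly where the paper only says ``by similar reasoning.''
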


\noindent
{\em Proof.}
Let us begin with $\uind{0}(k)$ $(k=0, 1, \ldots, N-1)$,
which is optimal in $F(N,\xi)$.
We write its objective function value as
\[
V=\|\uind{0}(0)\|_1+\|\uind{0}(1)\|_1+\cdots+\|\uind{0}(N-1)\|_1.
\]
If we extend it by putting zero for one time step, it is optimal
in $F(N+1,\xi)$ thanks to Lemma~\ref{lem:sparsity}.
By the principle of optimality, its tail sequence
$\uind{0}(1), \ldots, \uind{0}(N-1), 0$ is optimal
in $F(N,\xind{0}(1))$.
However, the input $\uind{1}(k)$ $(k=1, 2, \ldots, N)$,
obtained at the time $k=1$,
is also optimal in this same problem, which implies
\[
V=\|\uind{0}(0)\|_1+\|\uind{1}(1)\|_1+\|\uind{1}(2)\|_1+
\cdots+\|\uind{1}(N)\|_1.
\]
This further means that the input
$\uind{0}(0), \uind{1}(1), \uind{1}(2), \ldots, \uind{1}(N)$
is optimal in $F(N+1,\xi)$.

Repeating this discussion, we see that the input
$\uind{0}(0), \uind{1}(1), \ldots,
\uind{K}(K), \uind{K}(K+1), \ldots, \uind{K}(K+N-1)$
is optimal in $F(N+K,\xi)$ and satisfies
\[
V=\|\uind{0}(0)\|_1+\|\uind{1}(1)\|_1+\cdots+
\|\uind{K}(K)\|_1+\|\uind{K}(K+1)\|_1+\cdots+\|\uind{K}(K+N-1)\|_1.
\]
Invoke Lemma~\ref{lem:sparsity} to see that this input is zero
after $k=K$, that is,
$\uind{K}(k)=0$ for $k=K, K+1, \ldots, K+N-1$, which implies
\begin{equation}
V=\|\uind{0}(0)\|_1+\|\uind{1}(1)\|_1+\cdots+\|\uind{K-1}(K-1)\|_1.
\label{eq:recedingvalue}
\end{equation}

It is also possible to show $\uind{k}(k)=0$ for any $k$ larger than $K$
by similar reasoning.
Hence, the receding-horizon input $\uind{k}(k)$ $(k=0, 1, \ldots)$
can be regarded as the zero extension of the finite input sequence
$\uind{k}(k)$ $(k=0, 1, \ldots, K-1)$.
Noting \eqref{eq:recedingvalue} and that $V$ is the achievable optimal value
of $F(N,\xi)$ and also of $P$,
we arrive at the claim of the theorem.
\qed
\bigskip

This theorem says that we can produce an optimal control input
using a receding-horizon technique.
Note that this is not the case in the conventional optimal control
with a quadratic objective function.
The reason is that a value function in a finite horizon changes
its value depending on the horizon length
\cite[Chapter~4]{BGW90}\cite{MaS97}.
In contrast, in our setting,
there is no that change due to the input sparsity
if the horizon length is long enough.
The theorem also guarantees sparsity of the receding-horizon control input.
It is an attractive property because
a naive application of the receding-horizon technique may not
produce a sparse control input as will be seen in Example~\ref{exmpl:simple} below.

The preceding implementation uses the zero-input time $K$
whose definition may be conservative.
We may want to adopt adaptive choice of the horizon length
considered in Section~\ref{sec:adaptive}.
In this case, we choose the horizon length at each step $k$
so that the condition of Theorem~\ref{thm:adaptive} is satisfied.
Let us write this horizon length as $\Nind{k}$.
For each $k$, we consider and solve a finite-horizon subproblem
$F(\Nind{k},\xind{k-1}(k))$ to obtain an optimal control input
$\uind{k}(\ell)$ $(\ell=k, k+1, \ldots, k+\Nind{k}-1)$;
Apply the first value $\uind{k}(k)$ of the input to the plant to have
the next state $\xind{k}(k+1)=A\xind{k-1}(k)+B\uind{k}(k)$;
This is repeated.

\begin{thm}
\label{thm:recedingadaptive}
Suppose our infinite-horizon problem $P$ has a feasible solution
that makes the objective function value finite.
Consider a control input $\uind{k}(k)$ produced by the receding-horizon
technique with the adaptively chosen horizon length $\Nind{k}$.
Then, it is optimal in the infinite-horizon problem $P$.
Moreover, it is sparse in the sense that
$\uind{k}(k)=0$ for $k=K, K+1, \ldots$
with the zero-input time $K$ defined for the initial state $\xi$.
\end{thm}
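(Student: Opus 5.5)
The plan is to follow the proof of Theorem~\ref{thm:receding}, using Theorem~\ref{thm:adaptive} where that proof used Lemma~\ref{lem:sparsity} to justify zero extension. For a state $x$ reached at time $k$, let $W_k$ denote the optimal value of the infinite-horizon problem $P$ started at time $k$ from $x$, with the time-indexed bound $\|u(\ell)\|_\infty\leq\alpha^\ell$ kept in absolute time. This is the same time shift the paper already uses in the receding-horizon description. Write $\xind{-1}(0)=\xi$ and $V=W_0$, the optimal value of $P$ from $\xi$.

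First I would prove an invariant by induction on $k$:
\begin{enumerate}[(i)]
\item the subproblem $F(\Nind{k},\xind{k-1}(k))$ is feasible;
\item the zero extension of $\uind{k}(\ell)$ $(\ell=k,\ldots,k+\Nind{k}-1)$ is optimal in $P$ from $\xind{k-1}(k)$, with value $W_k$;
\item $\sum_{j=0}^{k-1}\|\uind{j}(j)\|_1+W_k=V$.
\end{enumerate}
Item (ii) follows directly from Theorem~\ref{thm:adaptive}, because $\Nind{k}$ is chosen so that its costate condition holds.

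For the induction step, I apply the principle of optimality to $P$. The tail of the zero-extended $\uind{k}$, starting at time $k+1$, is optimal in $P$ from $\xind{k}(k+1)$. Hence $W_{k+1}=W_k-\|\uind{k}(k)\|_1$, which gives (iii) at $k+1$. This tail is also a finite-support feasible input. By the argument in the proof of Theorem~\ref{thm:solvability}, the anti-stable part must vanish at the end of its support, so $F(M,\xind{k}(k+1))$ is feasible for large $M$. I expect feasibility for the particular length $\Nind{k+1}$ to be part of the adaptive rule: one increases the length until the problem is feasible and the condition of Theorem~\ref{thm:adaptive} holds. This gives (i) at $k+1$.

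Combining (ii) and (iii), for every $k$ the concatenation of $\uind{0}(0),\ldots,\uind{k-1}(k-1)$ with the zero extension of $\uind{k}$ is feasible in $P$ from $\xi$ with cost $V$, so it is optimal in $P$.

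For sparsity, take $k=K$, with $K$ the zero-input time for $\xi$. The concatenated input above has finite support, so for $M$ large it is feasible in $F(M,\xi)$ with cost $V$. By Theorem~\ref{thm:solvability}, the optimal value of $F(M,\xi)$ equals $V$ for $M>K$: zero extensions of $F(M)$ inputs are feasible in $P$, and conversely $P$ admits an optimal input obtained from some $F(M)$. Hence this concatenated input is optimal in $F(M,\xi)$. Lemma~\ref{lem:sparsity} then forces it to vanish from time $K$ on, so $\uind{K}(\ell)=0$ for all $\ell\geq K$ and therefore $W_K=0$. Once $W_k=0$, any optimal input from that state has zero cost and is identically zero, so $\uind{k}(k)=0$ for all $k\geq K$.

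For optimality, the applied input is then the zero extension of $\uind{0}(0),\ldots,\uind{K-1}(K-1)$, whose cost is $V$ by (iii). Since $W_K=0$, the zero input is feasible from $\xind{K-1}(K)$. Therefore the anti-stable part of the state is zero and the stable part decays, so the applied input is feasible in $P$ and attains $V$.

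The step I expect to be the main obstacle is the careful use of the principle of optimality for $P$ with the time-varying bound $\alpha^\ell$, together with the identification of the $F(M)$ and $P$ optimal values that Lemma~\ref{lem:sparsity} needs. I would handle both by always indexing time absolutely and reusing the comparison argument in the proof of Theorem~\ref{thm:solvability}.
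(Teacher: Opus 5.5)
Your proposal is correct and takes essentially the paper's route: the paper proves this theorem by rerunning the proof of Theorem~\ref{thm:receding} with Theorem~\ref{thm:adaptive} in place of Lemma~\ref{lem:sparsity}. That proof uses the principle of optimality to keep the concatenated receding-horizon input optimal at constant value, then applies Lemma~\ref{lem:sparsity} at the zero-input time $K$ for $\xi$. Your one variation is to apply the principle of optimality to $P$ through the value $W_k$ rather than to $F(N+k,\xi)$, which makes the paper's ``similarly'' precise because it renders the varying horizon lengths $\Nind{k}$ harmless.
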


\noindent
{\em Proof.}
It is proved similarly to the previous theorem
with Theorem~\ref{thm:adaptive}.
\qed
\bigskip

Here we go back to our problem formulation in Section~\ref{sec:problem}
and consider why $\alpha>1$ is necessary when $A_\rmo$,
the original system matrix in \eqref{eq:original},
has an eigenvalue on the unit circle.
Suppose a simple plant $A_\rmo=1$ and $B_\rmo=1$ with the initial state $\xi=1$
and consider the problem $P_\rmo$.
If we reformulate it to the problem $P$ with $\alpha=1$,
we have $x(k)=\xi+u(0)+u(1)+\cdots+u(k-1)$.
Hence, if an input $u(k)$ is equal to $-1$ for one $k$ and equal to zero
otherwise, it is optimal.
This means that the time to apply a nonzero input can be postponed
indefinitely and thus
there is no zero-input time such as in Lemma~\ref{lem:sparsity}.
Moreover, it is unlikely that a receding-horizon technique is useful
in this situation because the technique heavily relies on the first value
in the input sequence.
We need to prohibit such postponement to recover the results in this paper
and one way is to put penalty on the future input by choosing $\alpha>1$.

\section{Examples}
\label{sec:examples}
We applied the proposed approach to two example problems.

\begin{exmpl}
\label{exmpl:simple}
Consider as a plant a simple continuous-time system:
\[
\dot{x}_\rmc(t)=\begin{pmatrix} 0 & 1 \\ 1 & 0 \end{pmatrix}x_\rmc(t)
+\begin{pmatrix} 0 \\ 1 \end{pmatrix}u_\rmc(t),\quad
x_\rmc(0)=\begin{pmatrix} 2.3 \\ -2.6 \end{pmatrix}.
\]
For its control, it was discretized
with the zero-order hold for the sampling period $h=0.01$.
With the correspondence $x_\rmo(k)=x_\rmc(kh)$,
the resulting discretized plant was expressed in the form of
\eqref{eq:original} with
\[
A_\rmo=\exp\Bigl[
\begin{pmatrix} 0 & 1 \\ 1 & 0 \end{pmatrix}\! h\Bigr],\quad
B_\rmo=\int_0^h
\exp\Bigl[
\begin{pmatrix} 0 & 1 \\ 1 & 0 \end{pmatrix}\! t\Bigr]\rmd t
\begin{pmatrix} 0 \\ 1 \end{pmatrix},\quad
\xi=\begin{pmatrix} 2.3 \\ -2.6 \end{pmatrix}.
\]

\begin{figure}[t]
\begin{center}
\setlength{\unitlength}{1cm}
\begin{picture}(0,5)
\put(0,0){\makebox(0,0)[b]{\includegraphics[height=5cm]{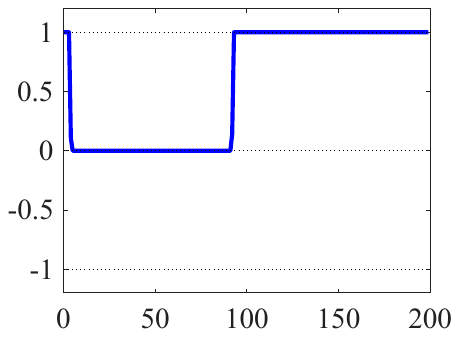}}}
\put(0.3,-0.3){\makebox(0,0){time $k$}}
\end{picture}
\end{center}
\caption{An optimal input of a finite-horizon optimal control problem.
It is equal to zero for $k=5, 6, \ldots, 91$.
The objective function value
(\ie, the sum of the input norms) is $111.2$.}
\label{fig:finite}
\end{figure}

\begin{figure}[t]
\begin{center}
\setlength{\unitlength}{1cm}
\begin{picture}(0,5)
\put(0,0){\makebox(0,0)[b]{\includegraphics[height=5cm]{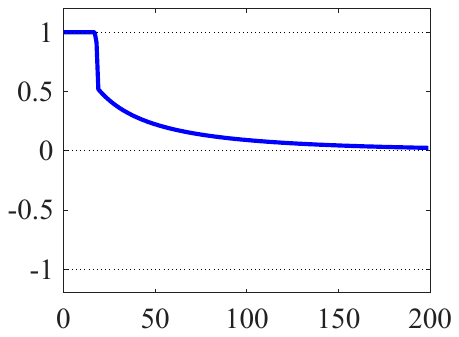}}}
\put(0.3,-0.3){\makebox(0,0){time $k$}}
\end{picture}
\end{center}
\caption{An input generated by naive application of the
receding-horizon technique.
It is not equal to zero at least up to $k=199$, thus, not sparse.
The sum of the input norms is $41.8$ up to $k=199$.}
\label{fig:naive}
\end{figure}

We first took a naive approach for its sparse control.
Set the horizon length as $N=200$ and consider the optimization problem:
\begin{alignat*}{2}
&\text{minimize}\ \ &&\sum_{k=0}^{N-1}\|u_\rmo(k)\|_1\\
&\text{subject to}\ \
&&x_\rmo(k+1)=A_\rmo x_\rmo(k)+B_\rmo u_\rmo(k)\ \ (k=0,\ 1,\ \ldots,\ N-1),\\
&&&x_\rmo(0)=\xi,\quad x_\rmo(N)=0,\\
&&&\|u_\rmo(k)\|_\infty\leq 1\ \ (k=0,\ 1,\ \ldots,\ N-1).
\end{alignat*}
Note that this problem is concerned with both stable part and anti-stable
part unlike our subproblem $F(N)$.
Figure~\ref{fig:finite} shows an optimal input for this problem,
which is equal to zero for $k=5, 6, \ldots, 91$.

In order to use this input in the infinite horizon,
we adopted the receding-horizon technique fixing the horizon length to $200$.
The produced input is presented in Figure~\ref{fig:naive}.
It is not sparse at all.
Note that, even if an input sequence produced at each $k$ is sparse,
its first value, actually used in the receding-horizon technique,
may not be equal to zero.

We next applied the proposed approach.
Since our matrix $A_\rmo$ had no eigenvalues on the unit circle,
we set $\alpha=1$, which meant $x(k)=x_\rmo(k)$, $u(k)=u_\rmo(k)$,
$A=A_\rmo$, and $B=B_\rmo$.
The plant was decomposed into the stable part and
the anti-stable part with the transformation matrix
$T=\begin{pmatrix} 1 & 1 \\ -1 & 1 \end{pmatrix}$.
The subproblem $F(\ol{N})$ became feasible for $\ol{N}=36$ and
gave the objective function value $V=35.7$.
The zero-input time $K$ in Lemma~\ref{lem:sparsity}
was computed as $37$.

\begin{figure}[t]
\begin{center}
\setlength{\unitlength}{1cm}
\begin{picture}(0,5)
\put(0,0){\makebox(0,0)[b]{\includegraphics[height=5cm]{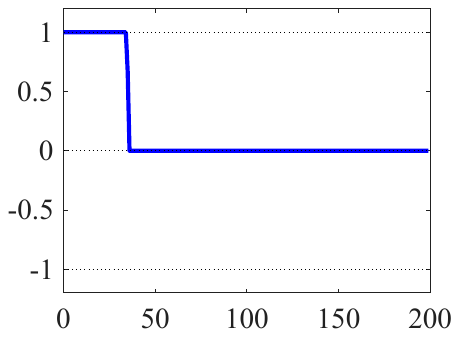}}}
\put(0.3,-0.3){\makebox(0,0){time $k$}}
\end{picture}
\end{center}
\caption{An input given by the proposed approach.
It is equal to zero after $k=36$ and makes the
objective function value $35.7$,
which is smaller than the one given by the receding-horizon
input in Figure~\ref{fig:naive}.}
\label{fig:antistable}
\end{figure}

We solved the finite-horizon subproblem $F(N)$ for $N=38$
and considered zero extension of an optimal input.
The resulting input is presented in Figure~\ref{fig:antistable}.
By Theorem~\ref{thm:solvability}, this input is optimal
in the infinite horizon.
Indeed, it makes the objective function value $35.7$,
which is smaller than $41.8$,
a lower bound of the objective function value achieved by
the previous receding-horizon input (the sum of the input norms up to $k=199$).
We can also see that the input is sparse.
Theorem~\ref{thm:solvability} guarantees that the input is equal to zero
after $k=K=37$.
In fact, it becomes zero earlier at $k=36$.

Our approach was used also with the receding-horizon technique.
The generated input looked
close to the one in Figure~\ref{fig:antistable}.
\fin
\end{exmpl}

\begin{exmpl}
\label{exmpl:HCW}
We considered a motion of a chaser spacecraft relative to
a target, which was in a circular orbit around the earth.
This motion is approximately described
by the Hill--Clohessy--Wiltshire equation
\cite[Section~4.6]{Wie08}\cite[Section~6.8]{Val22}:
\begin{equation}
\frac{\rmd}{\rmd t}
\begin{pmatrix} a(t) \\ b(t) \\ \dot{a}(t) \\ \dot{b}(t) \end{pmatrix}
=
\begin{pmatrix} 0 & 0 & 1 & 0 \\ 0 & 0 & 0 & 1 \\
3\omega^2 & 0 & 0 & 2\omega \\ 0 & 0 & -2\omega & 0 \end{pmatrix}
\begin{pmatrix} a(t) \\ b(t) \\ \dot{a}(t) \\ \dot{b}(t) \end{pmatrix}
+
\begin{pmatrix} 0 & 0 \\ 0 & 0 \\ 1 & 0 \\ 0 & 1 \end{pmatrix}
\begin{pmatrix} u_a(t) \\ u_b(t) \end{pmatrix}.
\label{eq:HCW}
\end{equation}
Here, $(a(t), b(t))$ stands for the relative position of the chaser
with respect to the target in the plane containing the orbit of the
target,
where the $a$-axis is in the radial direction from the earth to the target
and the $b$-axis in the tangential direction in which the target travels.
The motion perpendicular to the orbital plane is not considered here.
The inputs $u_a(t)$ and $u_b(t)$ are the force per unit mass of the chaser
along the $a$- and $b$-axes, respectively.
The quantity $\omega$ is equal to $\sqrt{GM/r^3}$
for the gravitational constant $G$, the mass of the earth $M$, and
the radius $r$ of the orbit of the target.
We chose $r=6.8\times 10^6\,[\text{m}]$,
which gave $\omega=1.1\times 10^{-3}\,[\rms^{-1}]$.

In order to formulate the problem into the form of $P_\rmo$,
we discretized the system with the sampling period $h=1\,[\rms]$
and wrote it as $x_\rmo(k+1)=A_\rmo x_\rmo(k)+B_\rmo u_\rmo(k)$.
The initial state was chosen as $\xi=[100\ \ 100\ \ 0\ \ 0]^\rmT$
in meters.
The matrix $A_\rmo$ had all of its eigenvalues on the unit circle.
We hence chose $\alpha=1.05$ and set up the problem $P$
with $A=\alpha A_\rmo$ and $B=\alpha B_\rmo$.

\begin{figure}[t]
\begin{center}
\setlength{\unitlength}{1cm}
\begin{picture}(0,5)
\put(0,0){\makebox(0,0)[b]{\includegraphics[height=5cm]{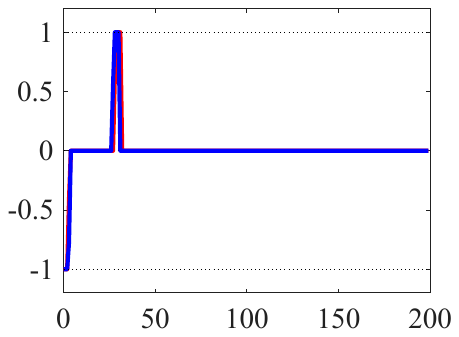}}}
\put(0.3,-0.3){\makebox(0,0){time $k$}}
\end{picture}
\end{center}
\caption{An optimal input $u^*_\rmo(k)$ for the spacecraft example,
which was computed with an adaptively chosen horizon length $N=172$.
The first component of the input is expressed by the red line and
the second by the blue line.
They both appear to equal to zero after $k=32$.}
\label{fig:HCW}
\end{figure}

We evaluated the zero-input time $K$.
The transformation matrix $T$ was chosen as
\[
T=\begin{pmatrix}
0 & -2/3\omega & 0 & -1 \\ 1 & 0 & 2 & 0 \\
0 & 0 & \omega & 0 \\ 0 & 1 & 0 & 2\omega \end{pmatrix}
\begin{pmatrix} 1 & 0 & 0 & 0 \\ 0 & 0.01 & 0 & 0 \\
0 & 0 & 1 & 0 \\ 0 & 0 & 0 & 1 \end{pmatrix},
\]
which gave $\|T^{-1}A^{-1}T\|_1=0.96$.
Here the left matrix in the definition of $T$
is a matrix that converts the original system matrix in \eqref{eq:HCW}
into the Jordan canonical form and
the right matrix is for reduction of the off-diagonal element.
For the horizon length $\ol{N}=40$,
the subproblem $F(\ol{N})$ had a feasible solution with the objective
function value $V=37.5$.
With this preparation, the zero-input time $K$ was computed as $400$.
In order to shorten the required horizon length,
we tried the adaptive choice described in Section~\ref{sec:adaptive}
and found the horizon length $N=172$ satisfied the required condition.
Figure~\ref{fig:HCW} shows the resulting optimal input $u^*_\rmo(k)$,
where the effect of the factor $\alpha$ has been removed.
Both of its two components equal to zero after $k=32$.
We also applied the receding-horizon technique.
Although we checked the condition of Theorem~\ref{thm:recedingadaptive}
at every time step $k$,
there was no need to change the value of $N=172$.
The resulting input was almost the same as the one in Figure~\ref{fig:HCW}.
\fin
\end{exmpl}

\section{Conclusion}
\label{sec:concl}
We considered in this paper sparse optimal control in the infinite horizon.
The objective function here is the sum of the input $1$-norms,
which makes the optimal input sparse.
Thanks to this property, an optimal input in the infinite horizon
can be obtained by zero extension of an optimal input
of a finite-horizon subproblem $F(N)$.
Here, the horizon length $N$ is chosen longer than
the zero-input time $K$.
Explicit computation of $K$ is possible and adaptive choice of
the horizon length $N$ is also discussed.
The proposed approach can be used with the receding-horizon technique
and gives an optimal and sparse input in the infinite horizon.

The objective function employed here is the sum of the input $1$-norms.
Although this objective function is widely used in literature,
it is natural to ask if a similar approach is possible to a more
general class of objective functions.
The research is proceeding in this direction and the result
should be presented in the near future.

\appendix
\section{Proof of the Sufficiency Part of Lemma~\ref{lem:optimality}}
\label{pf:optimality}
Let $u^*(k)$, $x_\rma^*(k)$, and $p_\rma^*(k)$ satisfy the conditions
\eqref{eq:opt_x}--\eqref{eq:opt_u}.
On the other hand, let $u(k)$ and $x_\rma(k)$ be any input and state
that satisfy
\begin{align*}
x_\rma(k+1)&=A_\rma x_\rma(k)+B_\rma u(k)\ \
(k=0, 1, \ldots, N-1),\\
x_\rma(0)&=(O\ \ I)T^{-1}\xi,\ \ x_\rma(N)=0,\\
\|u(k)\|_\infty&\leq\alpha^k\ \
(k=0, 1, \ldots, N-1).
\end{align*}
Then we have
\begin{align*}
\sum_{k=0}^{N-1}\|u(k)\|_1
&=\sum_{k=0}^{N-1}\|u(k)\|_1+
\sum_{k=0}^{N-1}p_\rma^*(k+1)^\rmT\Bigl[A_\rma x_\rma(k)+B_\rma u(k)-x_\rma(k+1)\Bigr]\\
&=\sum_{k=0}^{N-1}\sum_{i=1}^m
\Bigl[|u_i(k)|+p_\rma^*(k+1)^\rmT b_i u_i(k)\Bigr]
+\sum_{k=1}^{N-1}\Bigl[p_\rma^*(k+1)^\rmT A_\rma-p_\rma^*(k)^\rmT\Bigr]x_\rma(k)\\
&\quad+p_\rma^*(1)^\rmT A_\rma x_\rma(0)-p_\rma^*(N)^\rmT x_\rma(N).
\end{align*}
The first equality follows from $x_\rma(k+1)=A_\rma x_\rma(k)+B_\rma u(k)$ and
the second from the change of the summation order.
In the last expression,
the first term is minimized at $u^*(k)$ due to \eqref{eq:opt_u}
and the remaining terms are constant irrespective of $x_\rma(k)$
due to \eqref{eq:opt_p}.
Therefore, it is minimized at $u^*(k)$.

\section{Proof of Lemma~\ref{lem:sparsity}}
\label{pf:sparsity}
%
%
%
For any $N>K$, we consider the subproblem $F(N)$ and
its optimal input $u^*(k)$ $(k=0,\ 1,\ \ldots,\ N-1)$.
This $N$ is larger than $\ol{N}$,
which was used in the definition of $K$,
and the feasible solution for $F(\ol{N})$ gives a feasible
solution of the present problem $F(N)$ by zero extension.
Hence, the present optimal solution $u^*(k)$ has
the objective function value smaller than or equal to $V$.

Recall $k_0$ that appeared in the definition of $K$.
We show the existence of an integer $k_2\leq k_0$ such that
$|u_i^*(k)|<\alpha^k$ for any $i=1,\ 2,\ \ldots, m$ and any $k=k_2-1,\
k_2-2,\ \ldots,\ k_2-n_\rma$.
Indeed, among $\lceil V\rceil+1$ intervals
$k_0-1\geq k\geq k_0-n_\rma$, $k_0-n_\rma-1\geq k\geq k_0-2n_\rma$, $\ldots$,
$k_0-\lceil V\rceil n_\rma-1\geq k\geq k_0-(\lceil V\rceil+1)n_\rma$,
at least one interval should have the property above.
If this is not the case, each interval has $k$
such that $|u_i^*(k)|\geq\alpha^k$ for some $i$,
which means $\|u^*(k)\|_1\geq\alpha^k\geq 1$.
Then, the value of the objective function satisfies
$\sum_{k=0}^{N-1}\|u^*(k)\|_1\geq\lceil V\rceil+1>V$,
which is a contradiction.

For the $k_2$ above, Lemma~\ref{lem:optimality} implies
\[
|p_\rma^*(k+1)^\rmT b_i|\leq 1
\quad(i=1,\ 2,\ \ldots, m;\ \ k=k_2-1,\ k_2-2,\ \ldots,\ k_2-n_\rma).
\]
Noting
$p_\rma^*(k_2)^\rmT A_\rma=p_\rma^*(k_2-1)^\rmT$,
$p_\rma^*(k_2-1)^\rmT A_\rma=p_\rma^*(k_2-2)^\rmT$, $\ldots$, we have
\[
|p_\rma^*(k_2)^\rmT b_i|\leq 1,\ \
|p_\rma^*(k_2)^\rmT A_\rma b_i|\leq 1,\ \
\ldots,\ \
|p_\rma^*(k_2)^\rmT A_\rma^{n_\rma-1}b_i|\leq 1
\]
for any $i=1,\ 2,\ \ldots,\ m$.
This leads to
\[
1\geq
\bigl\|p_\rma^*(k_2)^\rmT(B_\rma\ \ A_\rma B_\rma\ \
\cdots\ \ A_\rma^{n_\rma-1}B_\rma)\bigr\|_\infty.
\]
By the definition of the positive number $g$, we have
$1\geq\|p_\rma^*(k_2)\|_\infty g$.

Now for any integer $k$ such that $K=k_0+k_1\leq k\leq N-1$
and any $i=1,\ 2,\ \ldots,\ m$,
\[
|p_\rma^*(k+1)^\rmT b_i|
=|p_\rma^*(k_2)^\rmT A_\rma^{-(k-k_2+1)}b_i|
\leq\|p_\rma^*(k_2)\|_\infty \|A_\rma^{-1}\|_1^{k-k_2+1}\|b_i\|_1
\leq\frac{1}{g}\|A_\rma^{-1}\|_1^{k-k_2+1}\|b_i\|_1.
\]
Since $k-k_2+1\geq k-k_0+1\geq k_1+1$,
the above value is less than $1$ and
thus Lemma~\ref{lem:optimality} implies $u_i^*(k)=0$
for any $i=1,\ 2,\ \ldots,\ m$.

In order to show the second statement, extend the state and the costate
to the length $N'$ so that
\begin{align*}
x_\rma^*(k+1)&=A_\rma x_\rma^*(k)\ \
(k=N,\ N+1,\ \ldots,\ N'-1),\\
p_\rma^*(k)^\rmT&=p_\rma^*(k+1)^\rmT A_\rma\ \
(k=N,\ N+1,\ \ldots,\ N'-1).
\end{align*}
These extended state and costate satisfy
the optimality conditions \eqref{eq:opt_x}--\eqref{eq:opt_p}
in Lemma~\ref{lem:optimality}
together with the input by zero extension.
They also satisfy the condition \eqref{eq:opt_u} because
$|p_\rma^*(k+1)^\rmT b_i|<1$
for $k=N, N+1, \ldots, N'-1$, which is similarly seen to the above.
Hence Lemma~\ref{lem:optimality} assures the optimality of
this extended input.

\section{Proof of Lemma~\ref{lem:finite}}
\label{pf:finite}
The assumed stabilizability of $(A,B)$ implies controllability of
the anti-stable part $(A_\rma,B_\rma)$.
The corresponding controllability matrix
$W_\rma=(B_\rma\ \ A_\rma B_\rma\ \ \cdots\ \ A_\rma^{n_\rma-1}B_\rma)$
hence has a full row rank.
For the assumed feasible input $u(k)$, consider its corresponding
state $x(k)$ together with its anti-stable part $x_\rma(k)$.
Since the input $u(k)$ has a finite objective function value,
$u(k)\rightarrow 0$ as $k\rightarrow\infty$.
This implies the existence of $\ol{k}$ such that
$\|u(k)\|_\infty\leq 1/2$
for any $k=\ol{k},\ \ol{k}+1,\ \ldots,\ \ol{k}+n_\rma-1$.
Now choose a positive integer $K_u>\ol{k}+n_\rma-1$ so that
each element of the vector defined by
\[
(\delta_{n_\rma-1}^\rmT\ \ \delta_{n_\rma-2}^\rmT\ \ \cdots\ \
\delta_0^\rmT)^\rmT=
-W_\rma^\rmT(W_\rma W_\rma^\rmT)^{-1}A_\rma^{-(K_u-\ol{k}-n_\rma)}x_\rma(K_u)
\]
has a magnitude less than or equal to $1/2$ and
\[
\|W_\rma^\rmT(W_\rma W_\rma^\rmT)^{-1}\|_1\|A_\rma^{-1}\|_1^{K_u-\ol{k}-n_\rma}
\leq\frac{1}{\|B_\rma\|_1}.
\]
Such a $K_u$ exists because the matrix $A_\rma^{-1}$ has its eigenvalues
inside the unit circle and satisfies $\|A_\rma^{-1}\|_1<1$.
Note also that $x(k)$ and $x_\rma(k)$ converge to zero
due to the feasibility of $u(k)$.

Define a new input $\ol{u}(k)$ by
\[
\ol{u}(k)=
\begin{cases}
u(k)+\delta_j
&\text{ for $k=\ol{k}+j$\ \ $(j=0,\ 1,\ \ldots,\ n_\rma-1)$},\\
0&\text{ for $k=K_u,\ K_u+1,\ \ldots$,}\\
u(k)&\text{ otherwise.}
\end{cases}
\]
Recall that each element of $\delta_j$ has the magnitude less than
or equal to $1/2$, which means $\|\ol{u}(k)\|_\infty\leq\alpha^k$ for any $k$.
When the input $\ol{u}(k)$ is applied to the plant,
the corresponding state is written as $\ol{x}(k)$ and
its stable part and anti-stable part as $\ol{x}_\rms(k)$ and
$\ol{x}_\rma(k)$, respectively.
Here we have
\begin{align*}
\ol{x}_\rma(K_u)
&=A_\rma^{K_u}x_\rma(0)+\sum_{k=0}^{K_u-1}A_\rma^{K_u-k-1}B_\rma\ol{u}(k)\\
&=A_\rma^{K_u}x_\rma(0)+\sum_{k=0}^{K_u-1}A_\rma^{K_u-k-1}B_\rma u(k)
+\sum_{j=0}^{n_\rma-1}A_\rma^{K_u-\ol{k}-j-1}B_\rma\delta_j\\
&=x_\rma(K_u)+A_\rma^{K_u-\ol{k}-n_\rma}
(B_\rma\ \ A_\rma B_\rma\ \ \cdots\ \ A_\rma^{n_\rma-1}B_\rma)
(\delta_{n_\rma-1}^\rmT\ \ \delta_{n_\rma-2}^\rmT\ \ \cdots\ \
\delta_0^\rmT)^\rmT\\
&=0.
\end{align*}
After $k=K_u$ the anti-stable part of the state, $\ol{x}_\rma(k)$,
constantly equals to zero
because so does the input $\ol{u}(k)$.
The stable part of the state, $\ol{x}_\rms(k)$, converges to zero
after $k=K_u$ due to the zero input.
Hence the whole state $\ol{x}(k)$ converges to zero, which means that
the input $\ol{u}(k)$ is feasible in the problem $P$.

It remains to show that $\ol{u}(k)$ does not make
the objective function value larger than $u(k)$.
For that, it suffices to show
\[
\|(\delta_{n_\rma-1}^\rmT\ \ \delta_{n_\rma-2}^\rmT\ \cdots\ \
\delta_0^\rmT)^\rmT\|_1
\leq\sum_{k=K_u}^\infty\|u(k)\|_1.
\]
Note first that
\begin{align}
\|(\delta_{n_\rma-1}^\rmT\ \ \delta_{n_\rma-2}^\rmT\ \ \cdots\ \
\delta_0^\rmT)^\rmT\|_1
&\leq\|W_\rma^\rmT(W_\rma W_\rma^\rmT)^{-1}\|_1
\|A_\rma^{-1}\|_1^{K_u-\ol{k}-n_\rma}\|x_\rma(K_u)\|_1
\nonumber\\
&\leq\frac{1}{\|B_\rma\|_1}\|x_\rma(K_u)\|_1.
\label{eq:1}
\end{align}
On the other hand, for any $\ell>K_u$,
\[
x_\rma(\ell)
=A_\rma^{\ell-K_u}x_\rma(K_u)+\sum_{k=K_u}^{\ell-1}A_\rma^{\ell-k-1}B_\rma u(k),
\]
which implies
\[
A_\rma^{K_u-\ell}x_\rma(\ell)=
x_\rma(K_u)+\sum_{k=K_u}^{\ell-1}A_\rma^{K_u-k-1}B_\rma u(k).
\]
In the limit of $\ell\rightarrow\infty$,
$A_\rma^{K_u-\ell}$ converges to zero and so does
$x_\rma(\ell)$,
which gives
\[
0=x_\rma(K_u)+\sum_{k=K_u}^\infty A_\rma^{K_u-k-1}B_\rma u(k).
\]
Thus we have
\begin{equation}
\|x_\rma(K_u)\|_1
\leq\sum_{k=K_u}^\infty\|A_\rma^{-1}\|_1^{k+1-K_u}\|B_\rma\|_1\|u(k)\|_1
\leq\|B_\rma\|_1\sum_{k=K_u}^\infty\|u(k)\|_1.
\label{eq:2}
\end{equation}
The inequalities~\eqref{eq:1} and \eqref{eq:2} give the desired inequality.

\end{document}